\documentclass{article}

\usepackage{PRIMEarxiv}

\usepackage[utf8]{inputenc} 
\usepackage[T1]{fontenc}    
\usepackage{hyperref}       
\usepackage{url}            
\usepackage{booktabs}       
\usepackage{amsfonts}       
\usepackage{nicefrac}       
\usepackage{microtype}      
\usepackage{fancyhdr}       
\usepackage{graphicx}       
\graphicspath{{media/}}     

\usepackage{subcaption}

\usepackage{amsmath,amssymb, amsthm}
\usepackage{lipsum}
\usepackage{amsfonts}
\usepackage{graphicx}
\usepackage{epstopdf}
\usepackage{algorithmic}
\newtheorem{definition}{Definition}
\newtheorem{theorem}{Theorem}
\newtheorem{lemma}[theorem]{Lemma}
\newtheorem*{remark}{Remark}
\newtheorem{corollary}{Corollary}[theorem]
\newtheorem{proposition}{Proposition}
\usepackage{algorithm}
\usepackage{placeins}

\pagestyle{fancy}
\thispagestyle{empty}
\rhead{ \textit{ }} 


\title{A Physics Based Surrogate Model in Bayesian Uncertainty Quantification involving Elliptic PDEs
}

\author{
  A. Galaviz, J. A. Christen \\
  Centro de Investigación en Matemáticas (CIMAT) \\
  Guanajuato, México\\
  \texttt{\{anel.galaviz, jac\}@cimat.mx} \\
   \And
  A. Capella \\
  Instituto de Matemáticas \\
  Universidad Nacional Autónoma de México (UNAM) \\
  Ciudad de México, México\\
  \texttt{capella@im.unam.mx} \\
}

\begin{document}
\maketitle

\begin{abstract}
The paper addresses Bayesian inferences in inverse problems with uncertainty quantification involving a computationally expensive forward map associated with solving a partial differential equations. To mitigate the computational cost, the paper proposes a new surrogate model informed by the physics of the problem, specifically when the forward map involves solving a linear elliptic partial differential equation. The study establishes the consistency of the posterior distribution for this surrogate model and demonstrates its effectiveness through numerical examples with synthetic data. The results indicate a substantial improvement in computational speed, reducing the processing time from several months with the exact forward map to a few minutes, while maintaining negligible loss of accuracy in the posterior distribution.
\end{abstract}

\keywords{Surrogate Model \and PDE \and Bayesian inference \and Uncertainty Quantification}

\section{Introduction}
The goal of Bayesian inference with uncertainty quantification applied to inverse problems is to calculate a posterior distribution for the unknown parameters. In instances involving complex models where obtaining an exact solution for the posterior distribution is impractical, approximations are often employed, typically generated by sampling algorithms like Markov Chain Monte Carlo (MCMC) methods \cite{Robert, Liu}. Each step of the MCMC algorithm involves evaluating the posterior distribution, which includes likelihood and forward map evaluations. In cases where forward map evaluations are computationally demanding, a common strategy is to utilize a surrogate model for the forward map. While this approach reduces computational time and facilitates inference, it introduces numerical and approximation errors.

In science, multiple computational models, each with varying costs and fidelities, are employed to describe a system of interest. High-fidelity models offer accurate descriptions but come with significant computational cost. For instance, models involving complex Partial Differential Equations (PDEs) with multiple parameters and fine meshes require considerable computational time, ranging from minutes to days, for a single forward simulation.  Applications, such as design optimization or Bayesian inversion with uncertainty quantification, demand thousands to millions of model evaluations  \cite{Stuart-2010}. Unfortunately, the associated computational costs often render these tasks practically inaccessible. These challenges motivate the need for the development of computationally efficient surrogate models. Typically, surrogate models are computationally fast but less accurate, so we describe them as low-fidelity. Choosing a low-fidelity model depends heavily on the particular application. A common practice involves leveraging both low and high-fidelity models to expedite simulations while gauging accuracy, necessitating a well-defined model management strategy \cite{Peherstorfer-2018}.

Surrogate models can be broadly categorized into four main types: projection-based reduced models, data-fit interpolation and regression models, machine-learning-based models, and simplified surrogate models \cite{Peherstorfer-2018, Asher-2015}. The Projection-based reduced surrogate model aims to reduce the complexity of the Forward Map by altering the mathematical structure of the forward's map numerical scheme\cite{LAWRENCE, Rozza, Bui-Thanh-2008}. The basic idea is to project the solutions of the elliptic partial differential equations (PDEs) into a low-dimensional subspace to derive a reduced model. The success of this projection step relies on a good understanding of the Forward Map's structure. Various methods exist for constructing the low-dimensional subspace \cite{Benner}. One commonly employed technique is proper orthogonal decomposition (POD) \cite{Sirovich, Berkooz}. POD methods utilize``snapshots," which are state vectors of the Forward Map (from the high-fidelity model) corresponding to selected inputs, to establish a basis for the low-dimensional subspace. POD is favored for its versatility, as it is applicable to a wide range of problems, including those that are time-dependent and nonlinear \cite{Chaturantabut}. Depending on the nature of the problem at hand, these reduced basis models may incorporate inexpensive a posteriori error estimators for the outputs of the reduced model \cite{HUYNH2007473}.

Data-fit surrogate models are directly derived from the inputs and outputs of the high-fidelity model. These models, obtained through interpolation, regression, or black-box techniques, represent approximations to the Forward Map evaluations as linear combinations of basis functions. The construction involves fitting coefficients through interpolation or regression to the corresponding inputs and outputs of the high-fidelity model. The choice of bases for interpolation or regression significantly influences the quality of the surrogate model's approximation to the Forward Map. Radial basis functions, commonly used for modeling data, are one such type of basis function \cite{Rasmussen, FORRESTER200950}. This category includes deep neural network models capable of addressing high dimensionality by approximating the solutions to specific classes of partial differential equations \cite{Berner-2020, grohs2018proof, Jentzen, KOURAKOS-2009}.

Machine-learning-based surrogate models serve as efficient substitutes for the Forward Map. These models aim to capture intricate relationships between input and output variables specific to a Forward Map. Various machine learning-based models, including support vector machines (SVMs), neural networks, and random forests, are known for their flexibility and ability to adapt to complex data patterns \cite{He-2019, Cai-2021, Cortes-1995, RAISSI2019686, Deveney-2021}.
One notable advantage is the continuous adaptability of these models when acquiring new data, allowing them to evolve as more information about the system becomes available. However, it is important to note that many machine-learning models involve hyperparameters that must be tuned to optimize the surrogate model's performance for a specific task.

Simplified models are derived by leveraging knowledge of the Forward Map model to provide a more agile and efficient representation of the system without substantial loss of accuracy compared to the Forward Map. Various techniques can be employed, including representing the system in a lower-dimensional space, fitting simple analytical functions to observed data, using a linearized version of a nonlinear PDE in a specific regime, or employing simpler mathematical formulations that capture the essence of the system's behavior \cite{Ashby-1996, SAIED-1998, SHI-2012, HOU-1997}.

All the above surrogate model construction strategies present both advantages and drawbacks. The Projection-based reduced surrogate model faces limitations tied to the critical choice of a function basis, impacting accuracy and introducing sensitivity to the training parameter space, particularly with complex function bases \cite{Santner-2003, Buckley-1993}. Similar to Projection-based models, data-fit surrogate models simplify reality, risking the loss of fine details in the Forward Map. Their quality heavily depends on the accuracy of high-fidelity data\cite{Iooss-2015, Wagner-1995}, posing challenges with unrepresentative or erroneous data. Machine-learning-based surrogate models offer advantages but are susceptible to overfitting, lack interpretability, and may lack direct uncertainty estimates crucial in Uncertainty Quantification. Simplified models struggle to represent complex phenomena accurately, limiting dynamic systems' utility. Validating simplified models is challenging, as their representation of complex phenomena may be underestimated.

In this paper, we introduce an approach aimed at constructing low-computational-cost surrogate models to facilitate Bayesian inference on parameters while maintaining control over the error in the posterior distribution.

The subsequent sections outline a new method termed \emph{Physics-Based Surrogate Models}. This surrogate model closely resembles a data-fit low-fidelity model approximation, yet it does not fit neatly into the classification above. Broadly speaking, we leverage the high-fidelity model's analytical properties and structure to determine our surrogate's coefficients.
Our approach is physics-based, as the model's structure plays a crucial role in shaping the surrogate model coefficients. Once the coefficients of the surrogate model are defined, only a relatively small number of high-fidelity simulations are required to establish the numerical approximation of the surrogate. An advantages of our method lies in the offline execution of high-fidelity simulations during a preprocessing step, eliminating the need for any further in-line high-fidelity model evaluations. Furthermore, the convergence and accuracy in our method result from the precise definition of coefficients and the number of high-fidelity offline simulations. The simulation selection becomes critical and corresponds to an experimental design strategy, which we elaborate on later in the text. The algorithm resulting from our approach involves evaluating a few algebraic equations for low-fidelity model assessments. This characteristic accelerates the outer-loop scheme by orders of magnitude, making it viable for addressing Bayesian uncertainty quantification problems. In loose terms, we propose a physics-informed, data-fit, low-fidelity surrogate model.

The paper is organized as follows. Section~\ref{sec:general} introduces the general setting and notation for the paper.   Section~\ref{sec:main} presents the proposed physics-informed surrogate model, 
Several results are presented, including guidelines for establishing the surrogate, characteristics of the experimental design employed for PDE evaluation, error estimates, and a proof of consistency for the posterior's approximation. Section~\ref{sec:examples} offers three examples of inverse problems in Electrical Impedance Tomography within a disk. Lastly, Section~\ref{sec:discussion} concludes the paper with a discussion and considerations for future work.


\section{General setting and notations}
\label{sec:general}
In this paper, we concern ourselves with regression models of the form 
\begin{equation}\label{modelo}
    y_i = \mathcal{H}_i (\mathcal{F}(\theta))+ \epsilon, 
    \qquad \epsilon_i \sim N(0, \sigma^2),
\end{equation} 
where $i=1,\ldots,m$ and the unknown parameter $\theta$ belongs to an admissible set of parameter $\Theta$.  In principle, the parameter set $\Theta$ belong to a infinite dimensional space, although we will only consider the finite-dimensional case.  Hence, we assume that $\Theta$ is an open set in $\mathbb{R}^d$.  The operator $\mathcal{F} : \Theta \to V$  is the Forward Map (FM), and the operator $\mathcal{H}: V \to \mathbb{R}^m$ is the observation operator. We assume a component-independent additive Gaussian observational error $\epsilon_i$ for definitiveness, but its particular form is not essential in our methods. 

The composition $\mathcal{H} \circ \mathcal{F}$ characterizes the regression problem, and in many practical cases, $\mathcal{F}$ involves solving a system of ordinary or partial differential equations. Examples are found  in ecology \cite{hutchinson2017}, image processing \cite{Giovannelli, Chama-2012}, and heat transfer \cite{Kaipio-2011, Berger-2016}, among others. 

In the present paper, we focus on the case where $\mathcal{F}(\theta)$ is the solution $u_\theta$ of a linear elliptic partial differential equation (PDE) on a domain $\Omega\subset \mathbb{R}^n$ with some boundary conditions in a Sobolev Space $V$. Namely, $\mathcal{F}(\theta) = u_\theta$ is the unique (theoretical) weak solution to the elliptic PDE equation that involves the parameter $\theta$ in some way. We will also say that $u_\theta$ is the variational solution to the PDE problem and can be a vector or a scalar function. In this setting, the Finite element method (FEM) is the standard approach to compute a numerical approximation to $u_\theta$. In addition, we assume that $\mathcal{H}_i$ is a linear operator evaluated on the PDE solution $u_\theta$. From now on, and without loss of generality, we assume that $\mathcal{H}_i$ is the evaluation operator at some set of given points $\{x_i \}_{i=1}^m\subset \Omega$, namely $u_\theta(x_i)$ for $i=1,\ldots,m$. Hence, $\mathcal{G}$ is the composition of  
$$ 
\theta \xrightarrow{\mathcal{F}} u_\theta \xrightarrow{\mathcal{H}} \{ u_\theta(x_1), u_\theta(x_2), \ldots, u_\theta(x_m) \},
$$
and, component-wise we write $\mathcal{G}_i(\theta) = \mathcal{H}_i(u_\theta)=u_\theta(x_i)$.


\section{Main results} \label{sec:main}

\subsection{Physics-Based Surrogate Model Method}
\label{sec:PBSMM}
The method for constructing Physics-Based Surrogate Models applies under the following three assumptions: 

First, we assume that $\mathcal{F}$ maps a parameter $\theta$ to the weak solution of an elliptic PDE of form 
\begin{equation}\label{classicalpde}
L_\theta u = f \quad\text{ in } \Omega,
\end{equation}
subject to some boundary conditions. The operator $L_\theta$ is uniformly elliptic and depends on $\theta$ in some yet unspecified form. In this case, the corresponding weak formulation of \eqref{classicalpde} on a Hilbert space $V$ is given by: finding $u \in V$ such that 
\begin{equation}\label{bilform}
 \mathcal{A}_\theta ( u, v) ={\cal L} (v) \text{ for every } v\in V.
\end{equation}
The bilinear form $\mathcal{A}_\theta(\cdot,\cdot)$ is also parameter-dependent, and we write $u_\theta$ for the weak solution of \eqref{bilform} for a given $\theta$. Moreover, due to the uniform ellipticity of $L_\theta$, the bilinear operator $\mathcal{A}_\theta$ satisfies the assumptions of the Lax-Milgram theorem \cite[p.~335]{salsa}, and the weak solution's existence and uniqueness are assured.

 Second, regarding the $\mathcal{A}_{\theta}$ dependence on $\theta$, we assume the following estimate 
\begin{equation} \label{modulodecontinuidad}
\sup_{v\in V, \Vert v \Vert_V \leq 1} \vert \mathcal{A}_{\theta}(u,v)-\mathcal{A}_{\theta'}(u,v) \vert \leq C G(\theta, \theta')  F(u), \end{equation}
where $C>0$ is a constant, and the function $G$ depends on $\theta$ and $\theta'$, but it is independent of $u$. Furthermore, we assume that $G$ is symmetric on its entries and $G(\theta, \theta') \to 0$ when $\theta \to \theta'$. Notice that the latter implies the continuity of $\mathcal{A}_{\theta}$ with respect to the parameters $\theta$. Finally, the function $F(u)$ is assumed to be bounded for any $u\in V$. Estimate \eqref{modulodecontinuidad} is a key structural hypothesis in our method. 

Third, there exists a design. That is a finite set of parameter values,  $\mathcal{D} = \{\theta_1,\ldots,\theta_n \}\subset\Theta$, and its corresponding set of weak PDE solutions to \eqref{classicalpde}, namely $\mathcal{U}= \{u_{\theta_1}, u_{\theta_2}, \ldots, u_{\theta_n} \}$, and we call its element surrogate solutions. 
\medskip

\begin{remark}
We implicitly assume that all three latter assumptions hold for the rest of the paper.
\end{remark}

\medskip
Next, we introduce the set $\mathcal{I}_\theta$ of $k$ ``nearest neighbors" of $\theta$ according to $G$.  The set $\mathcal{I}_\theta$ is a set of indices, and we proceed in two steps to compute it. First, we order the elements of the design $\mathcal{D}$ in terms of its $G(\theta,\cdot)$ value, namely, 
$$
G(\theta,\theta_{\tau(1)}) \leq G(\theta,\theta_{\tau(2)}) \leq \ldots \leq G(\theta,\theta_{\tau(n)}),
$$ 
where $\tau$ is some permutation of the indexes $\{1,2,\dots,n\}$. Second, we select the indices of the first $k\le n$ terms of the permutation $\tau$. Namely, $\mathcal{I}_\theta = \{i  \mid \tau(i)\le k \}$ where clearly  $\vert \mathcal{I}_\theta \vert = k$. Notice that we do not assume that $G$ is a distance, but we can use it to order the elements in $\mathcal{D}$ and compute $\mathcal{I}_\theta$.

\medskip
\begin{definition}\label{def:surrogate}
 We say that  $\hat{\mathcal{F}}:\Theta \to V$ is the \textbf{Physics-based surrogate model definition} corresponding to $\mathcal{F}$ provided:
\begin{enumerate} 
\item[(a)] If $\theta \in \mathcal{D}$ then 
\begin{equation}
\hat{\mathcal{F}}(\theta) = u_\theta.
\end{equation}

\item[(b)] If $\theta\notin \mathcal{D}$, the surrogate model is a convex combination of the surrogate solutions of the form
\begin{equation}\label{surrogatemodel} 
\hat{\mathcal{F}}(\theta) = \hat{u}_\theta = \sum_{i \in \mathcal{I}_\theta} \alpha_i(\theta)u_{\theta_i},
\end{equation}
where $\alpha_i(\theta)$ are the surrogate's coefficients that we discuss in the following section,  and $\mathcal{I}_\theta$ is the set of $k$ ``nearest neighbors" of $\theta$ according to $G$. 
\end{enumerate}
Moreover, by composition of the surrogate with any linear observation operator $\mathcal{H}:V \to \mathbb{R}^m$, we have that 
\begin{equation*}
\mathcal{H}(\hat{\mathcal{F}}(\theta)) = \mathcal{H}(\hat{u}_\theta) = \sum_{i\in \mathcal{I}_\theta} \alpha_i(\theta) \mathcal{H}(u_{\theta_i}).
\end{equation*}
\end{definition}

\subsection{Surrogate model's coefficients computation} \label{smcc}

The key error and stability estimates in our method to construct the surrogate model approximation follow from the model's coefficients definition. The idea behind this definition is to maintain, as far as possible, the structure of PDE in its weak form. Consequently, we consider the following estimate,   
\begin{equation}
\sup_{v\in V, \Vert v \Vert_V \leq 1} \vert  \mathcal{A}_\theta ( \hat{u}_\theta,v)- \mathcal{L}(v) \vert
    = \sup_{v\in V,\Vert v \Vert_V \leq 1}  \left\vert  \mathcal{A}_\theta \left( \sum_{i \in \mathcal{I}_\theta} \alpha_i(\theta)u_{\theta_i},v \right) -  \mathcal{L}(v) \right\vert. 
\end{equation}
Now, since by assumption $\sum_{i\in \mathcal{I}_\theta} \alpha_i(\theta) = 1$, we use the linearity of $\mathcal{A}_\theta$ to get
\begin{align*}
    \sup_{v\in V, \Vert v \Vert_V \leq 1} \vert  \mathcal{A}_\theta ( \hat{u}_\theta,v)- \mathcal{L}(v) \vert
    &= \sup_{v\in V,\Vert v \Vert_V \leq 1}  \left\vert \sum_{i \in \mathcal{I}_\theta} \alpha_i(\theta) \mathcal{A}_\theta ( u_{\theta_i},v ) -  \sum_{i\in \mathcal{I}_\theta} \alpha_i(\theta) \mathcal{L}(v) \right\vert \\
    & \leq \sup_{v\in V, \Vert v \Vert_V \leq 1}  \sum_{i\in \mathcal{I}_\theta}  \alpha_i(\theta)  \, \vert \mathcal{A}_\theta ( u_{\theta_i},v) - \mathcal{L}(v) \vert.
\end{align*}
Moreover, by the definition of weak solution $\mathcal{A}_{\theta_i} ( u_{\theta_i}, v ) = \mathcal{L}(v) \text{ for all } v \in V$, we have
\begin{align*}
    \sup_{v\in V, \Vert v \Vert_V \leq 1} \vert  \mathcal{A}_\theta ( \hat{u}_\theta,v)- \mathcal{L}(v) \vert
    &\le  \sup_{v\in V, \Vert v \Vert_V \leq 1}  \sum_{i\in \mathcal{I}_\theta}  \alpha_i(\theta) \, \vert \mathcal{A}_\theta ( u_{\theta_i},v )- \mathcal{A}_{\theta_i} ( u_{\theta_i}, v ) \vert. 
\end{align*}
Finally, recalling the continuity estimate \eqref{modulodecontinuidad}, we conclude that 
\begin{equation}\label{eq:estimateA}
    \sup_{v\in V, \Vert v \Vert_V \leq 1} \vert  \mathcal{A}_\theta ( \hat{u}_\theta,v)- \mathcal{L}(v) \vert 
     \leq  \sum_{i\in \mathcal{I}_\theta}    \alpha_i(\theta) C G(\theta, \theta_i) F(u_{\theta_i}).
\end{equation}

Estimate \eqref{eq:estimateA} is crucial in defining the surrogate's coefficients. As explained above, we aim to chose the coefficients $\alpha_i(\theta)$  such that that $\mathcal{A}_\theta(\hat{u}_\theta,v) \approx \mathcal{L}(v) \text{ for all } v \in V$. Hence, we let $\varepsilon_{\theta} = \sum_{i\in \mathcal{I}_\theta}   \alpha_i(\theta) C  G(\theta, \theta_i) F(u_{\theta_i})$ and define 
$$ 
\alpha_i(\theta)  = \frac{\varepsilon_{\theta}}{k CG(\theta, \theta_i) F(u_{\theta_i})}.
$$  
Now, since by assumption $\vert \alpha \vert= \alpha_1(\theta) + \alpha_2(\theta) + \ldots + \alpha_n(\theta) =1$, we have that
$$ 
\vert \alpha \vert = \sum_{i\in \mathcal{I}_\theta} \frac{\varepsilon_\theta}{k CG(\theta,\theta_i) F(u_{\theta_i})}=1,
$$
Because of the above argument, we have the following definition.

\begin{definition}\label{def:coeff}
We define the coefficients of the physics-based surrogate as:
\begin{enumerate}
\item[(a)] If $\theta\in\mathcal{D}$, then $\theta=\theta_j$ for some $j\in \{1,\dots,n\}$ and we let $$
\alpha_k(\theta_i) =\delta_{k,j}$$ where $\delta_{k,j}= 1$ in $k=j$ and $\delta_{k,j}= 0$ otherwise.
\item[(b)] If $\theta\notin \mathcal{D}$, then 
\begin{equation} \label{pesos}
     \alpha_i(\theta) =  \frac{1}{\displaystyle G(\theta,\theta_i) F(u_{\theta_i}) \left( \sum_{j \in \mathcal{I}_\theta} \frac{1}{ G(\theta,\theta_j) F(u_{\theta_j})} \right)  }
     \end{equation} 
for every  $i \in  \mathcal{I}_\theta$ and $\alpha_i(\theta)=0$ for every $i \in  \{1,\dots n\}\setminus\mathcal{I}_\theta$. 
\end{enumerate}
\end{definition}
 
Because of Lemma~\ref{Lemma1} in Section~\ref{Sec:sec3.5} the map from parameters to surrogate functions is continuous at design points, and both definitions of the coefficients for $\theta \in \mathcal{D}$ and $\theta \notin \mathcal{D}$ are consistent. However, we chose the above presentation to stress that we must consider the cases separately in numerical implementations due to computer rounding errors.

\subsection{Design construction} \label{sec:design}

By definition, the surrogate model depends on the given design  $\mathcal{D}$. Almost any design would produce a surrogate model. However, we require the following approximation property on the design to ensure an adequate surrogate model's performance. 

\begin{definition}\label{Def:ApproxProperty} Assume that $\mathcal{D}$ is a design on a set of parameters $\Theta\subset \mathbb{R}^d$, such that $|\mathcal{D}|=n$, $k \in \mathbb{N}$ satisfies $1<k\le n$, and $\epsilon>0$. 
We say that $\mathcal{D}$ satisfies the \emph{($\epsilon$-)approximation property} on the set of parameters $\Theta$, if for any $\theta\in\Theta$, we have that 
\begin{equation} \label{theo5eq} 
G(\theta,\theta_i) \le \epsilon \text{ for every } i\in\mathcal{I}_\theta,
\end{equation}
where $\mathcal{I}_\theta$ is the set of $k$ ``nearest neighbors'' according to $G$.
\end{definition}

\begin{remark}
    We can make ${\mathcal D}$ finer in Definition~\ref{Def:ApproxProperty} by taking $\epsilon>0$ smaller; however, this typically implies a larger number $n$ of elements in the design. 
\end{remark}

There is no algorithm to construct a design satisfying the ($\epsilon$-)approximation property for a general function $G$. Here we present a strategy that applies under additional structure assumptions on $G$. Concretely, we assume that for every $\theta_1,\theta_2 \in \Theta$ the function $G$ in (\ref{modulodecontinuidad}) can be written as
\begin{equation}\label{eq:Gstructure}
G(\theta_1,\theta_2) = g\left(d\left(f(\theta_1),f(\theta_2)\right)\right),
\end{equation}
where $g: \mathbb{R} \to \mathbb{R}^+$ is a strictly increasing function, such that $g(0)=0$, $d$ is the distance function in a metric space $(L, d)$, and  $f: \Theta \to L$ is a continuous function. The metric space $(L, d)$ may be finite or infinite dimensional, but since we assumed that $\Theta$ is finite-dimensional,  the image of $f(\Theta)$  is a subset of $(L',d)$, a finite-dimensional metric sub-space of $(L,d)$. Furthermore, we assume that the set
$f(\Theta)$ is \textit{totally bounded} \cite[p.~412]{Kreyszig}. 
 \medskip

\begin{proposition}
Assume that $G$ satisfies all assumptions in the previous paragraph, then there exists a design $\mathcal{D}$ with $|\mathcal{D}|=n$ satisfying the ($\epsilon$-)approximation property in Definition~\ref{Def:ApproxProperty} for any given $\epsilon>0$ and $1<k\le n$. 
\end{proposition}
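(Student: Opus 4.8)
The plan is to reduce the condition on $G$ to a purely metric condition on the images $f(\Theta)$, and then build the design by covering the totally bounded set $f(\Theta)$ with finitely many small balls and placing $k$ design points near each. For the reduction, since $g$ is strictly increasing with $g(0)=0$, I fix $\delta>0$ small enough that $g(t)\le\epsilon$ for all $0\le t<\delta$, taking $\delta=\sup\{t\ge 0: g(t)\le\epsilon\}$; strict monotonicity gives $g(t)\le\epsilon$ for every $t<\delta$, and $\delta>0$ provided $g$ is right-continuous at $0$ (a mild assumption, automatic for any continuous modulus-type $g$, and the only place the analytic behavior of $g$ is used). With this $\delta$, the structural form \eqref{eq:Gstructure} yields the implication $d(f(\theta),f(\theta_i))<\delta \Rightarrow G(\theta,\theta_i)=g(d(f(\theta),f(\theta_i)))\le\epsilon$. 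Hence it suffices to produce a finite design $\mathcal{D}$ such that every $\theta\in\Theta$ admits at least $k$ design points $\theta_i$ with $d(f(\theta),f(\theta_i))<\delta$: if at least $k$ design points have $G$-value $\le\epsilon$, then the $k$-th smallest $G$-value is $\le\epsilon$, so every index in $\mathcal{I}_\theta$ (the $k$ smallest) satisfies $G(\theta,\theta_i)\le\epsilon$.

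Next I would invoke total boundedness. Because $f(\Theta)\subset(L',d)$ is totally bounded, for the radius $r:=\delta/2$ it possesses a finite $r$-net; concretely I take a maximal $r$-separated subset $S=\{y_1,\dots,y_N\}\subset f(\Theta)$, which is finite by total boundedness and, by maximality, has the property that every point of $f(\Theta)$ lies within distance $r$ of some $y_j$. Thus the balls $B(y_1,r),\dots,B(y_N,r)$ cover $f(\Theta)$. For each $j$ I then select $k$ distinct parameters $\theta_{j,1},\dots,\theta_{j,k}\in\Theta$ whose images $f(\theta_{j,l})$ all lie in $B(y_j,r)$, and set $\mathcal{D}=\{\theta_{j,l}: 1\le j\le N,\ 1\le l\le k\}$, so that $n=|\mathcal{D}|$ is a multiple of $k$ and $k\le n$.

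To verify the approximation property, take any $\theta\in\Theta$; its image lies in some $B(y_{j_0},r)$, and then for each $l$ the triangle inequality gives $d(f(\theta),f(\theta_{j_0,l}))\le d(f(\theta),y_{j_0})+d(y_{j_0},f(\theta_{j_0,l}))<r+r=\delta$. This exhibits $k$ design points within $d$-distance $\delta$ of $\theta$, so by the first step all indices of $\mathcal{I}_\theta$ satisfy $G(\theta,\theta_i)\le\epsilon$, which is exactly Definition~\ref{Def:ApproxProperty}. If a specific cardinality $n$ is prescribed, I can append extra design points, which only decreases nearest-neighbor distances and hence preserves the property.

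The step I expect to be the main obstacle is guaranteeing, in the construction above, that each ball $B(y_j,r)$ contains the images of $k$ \emph{distinct} parameters. This can fail for a general totally bounded set possessing isolated points, where fewer than $k$ points of $f(\Theta)$ may be available near $y_j$; so the argument needs the mild extra hypothesis that $f(\Theta)$ has no isolated points. This holds in the cases of interest: when $\Theta$ is connected and $f$ nonconstant, $f(\Theta)$ is a connected metric space with more than one point, and then every ball of positive radius meets $f(\Theta)$ in infinitely many points, so $k$ distinct preimages can always be chosen. I would state this no-isolated-points condition explicitly and then carry out the covering construction as above.
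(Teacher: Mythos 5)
Your proof is correct and arrives at the same conclusion as the paper's, but the mechanism at the crucial step --- guaranteeing that \emph{all} $k$ nearest neighbors, not just one, are $G$-close --- is genuinely different. The paper takes the design to be a preimage of an $\epsilon_0$-net $P_{\epsilon_0}$ itself, and then secures the $k$ neighbors via a chaining bound $d(\lambda_j,\lambda_k)<\lfloor k/2\rfloor\epsilon_0$ among net points, which with the triangle inequality gives $G(\theta,\theta_j)\le g\bigl((\lfloor k/2\rfloor+1)\epsilon_0\bigr)$, and finally rescales by choosing $\epsilon_0=g^{-1}(\epsilon)/(\lfloor k/2\rfloor+1)$. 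You instead duplicate: cover $f(\Theta)$ by finitely many balls of radius $\delta/2$ centered at a maximal separated subset (finite by total boundedness), and place $k$ distinct design points with images in each ball, so that any $\theta$ trivially has $k$ design points within $d$-distance $\delta$ and hence with $G$-value at most $\epsilon$; your observation that it suffices for the $k$-th smallest $G$-value to be $\le\epsilon$ then closes the argument. What each approach buys: the paper's design avoids the $k$-fold replication and comes with an explicit net radius, while your construction is more robust --- the paper's inequality $d(\lambda_j,\lambda_k)<\lfloor k/2\rfloor\epsilon_0$ is really a statement about grid-like nets on an interval and need not hold for an arbitrary $\epsilon_0$-net of a general totally bounded set (the $k$ nearest net points to $\lambda_i$ can be far away if the net is sparse near $\lambda_i$), whereas the per-ball replication requires no geometric regularity of the net whatsoever. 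Your handling of $g$ via $\delta=\sup\{t\ge 0: g(t)\le\epsilon\}$, flagging right-continuity at $0$, is also slightly more careful than the paper's implicit use of $g^{-1}(\epsilon)$.

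The one caveat you flag --- that each ball must contain the images of $k$ \emph{distinct} parameters --- is in fact automatic under the paper's standing assumptions, so your extra no-isolated-points hypothesis can be dropped. Since $\Theta\subset\mathbb{R}^d$ is open, any $\theta$ with $f(\theta)=y_j$ lies in a nondegenerate open connected component $U$ of $\Theta$. If $f$ is constant on $U$, pick $k$ distinct points of $U$, all with image $y_j$; otherwise $f(U)$ is a connected metric space with at least two points, hence has no isolated points, so every ball around $y_j$ meets $f(U)$ in infinitely many points, and distinct image points have distinct preimages. Either way $k$ distinct design parameters per ball always exist.
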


\begin{proof}
We prove the proposition by construction.  Because $f(\Theta)$ is  totally bounded, for every $\epsilon_0>0$, there exists a set $P_{\epsilon_0} = \{ \lambda_1,\ldots, \lambda_n \}$  of elements in $L'$, called $\epsilon_0-net$, such that $L' \subset \cup_{i=1}^n B_{\epsilon_0}(\lambda_i)$.  In particular, for any $\lambda \in L'$ we could find some $\lambda_i$ in $P_{\epsilon_0}$ such that $$d(\lambda,\lambda_i) <\epsilon_0. $$ 
In this manner, if for any $\theta \in \Theta$ we let $\lambda = f(\theta)$ we have that 
\begin{equation}\label{G-epsilon}
    G(\theta,\theta_i) = g(d(f(\theta),f(\theta_i)))= g(d(\lambda,\lambda_i)) \leq g(\epsilon_0) = \tilde\epsilon,
\end{equation}
where $\theta_i = f^{-1}(\lambda_i) \in \Theta$ for some $\lambda_i\in  P_{\epsilon_0}$. 
Thus, we let $\mathcal{D} = f^{-1}(P_{\epsilon_0}) = \{\theta_1, \theta_2, \ldots , \theta_n \}$. 

By the last argument, we have that for any $\theta\in \Theta$, there is at least one theta such that equation \eqref{theo5eq} holds.  To prove the claim,  we need to 
show that the latter property holds for $k$ elements in $\mathcal{D}$ for $\tilde{\epsilon}=\epsilon$. 

Given $\lambda\in L'$ and $\lambda_i\in P_{\epsilon_0}$ as above, we let $P^k$ be a subset of $k$ elements in $P_{\epsilon_0}$, nearest to $\lambda_i$. Because the approximation property of the
$\epsilon_0-net$ the distance between $\lambda_j$ and any $\lambda_k\in P^k$ satisfy $d(\lambda_j,\lambda_k) < \lfloor \frac{k}{2}\rfloor \epsilon_0$, where $\lfloor \cdot\rfloor$ is the integer part function.
Thus, 
$$
d(\lambda_k, \lambda) \leq d(\lambda_k,\lambda_i) +d(\lambda_i,\lambda) \leq \lfloor \frac{k}{2}\rfloor \epsilon_0 + \epsilon_0 = (\lfloor \frac{k}{2}\rfloor \epsilon_0+1)\epsilon_0.$$
Therefore, 
\begin{equation} \label{Eq:eqepsilon}
G(\theta,\theta_j) = g(d(f(\theta),f(\theta_j))) = g(d(\lambda,\lambda_j)) \leq g((\lfloor \frac{k}{2}\rfloor+1)\epsilon_0) 
\end{equation} 
for every $\lambda_k$ in $P^k$. By letting $\epsilon_0 = \frac{g^{-1}(\epsilon)}{(\lfloor \frac{k}{2}\rfloor+1)} $, the proposition follows.     
\end{proof}

As a final remark, this section assumes $\Theta \subset \mathbb{R}^d$, but generalizing it to other spaces is straightforward.  

\subsection{Error Estimates}

This section presents some results regarding error estimates between the actual solution $u_\theta$ and its surrogate approximation $\hat{u}_\theta$. 

\begin{lemma}\label{lem:errorest} 
For $\theta$ given, let $u_\theta$ be the solution of \eqref{bilform} and ${\mathcal D}$ be a design with $n$ elements. Assume that $k\le n$, $\hat{u}_\theta$ is the corresponding surrogate model as in Definition~\ref{def:surrogate}, and $C>0$ is the constant on the right hand side  of \eqref{modulodecontinuidad}. Let
$$
\varepsilon_\theta= \frac{C}{\frac{1}{k}\sum_{j\in \mathcal{I}_\theta} \frac{1}{G(\theta,\theta_j) F(u_{\theta_j})}},
$$
then, we have the error estimate
\begin{equation}\label{epsilon(k)}
		\sup_{v\in V, \Vert v \Vert_V \leq 1} \vert  \mathcal{A}_\theta ( \hat{u}_\theta,v)- \mathcal{L}(v) \vert 
		\leq   \varepsilon_\theta. 
\end{equation}
	
\end{lemma}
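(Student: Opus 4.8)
The plan is to start directly from the bound \eqref{eq:estimateA} already established in Section~\ref{smcc}, namely
$$
\sup_{v\in V, \Vert v \Vert_V \leq 1} \vert  \mathcal{A}_\theta ( \hat{u}_\theta,v)- \mathcal{L}(v) \vert \leq \sum_{i\in \mathcal{I}_\theta} \alpha_i(\theta)\, C\, G(\theta,\theta_i)\, F(u_{\theta_i}),
$$
and to evaluate the right-hand side using the explicit coefficients from Definition~\ref{def:coeff}. The argument splits naturally into the two cases of that definition.

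First I would dispose of the case $\theta\in\mathcal{D}$. Here $\theta=\theta_j$ for some $j$, and Definition~\ref{def:coeff}(a) gives $\hat{u}_\theta=u_{\theta_j}=u_\theta$, so that $\mathcal{A}_\theta(\hat{u}_\theta,v)=\mathcal{A}_{\theta_j}(u_{\theta_j},v)=\mathcal{L}(v)$ for every $v\in V$ by the definition of weak solution. Thus the left-hand side of \eqref{epsilon(k)} is identically zero, while $\varepsilon_\theta\ge 0$, and the inequality holds trivially.

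The substantive case is $\theta\notin\mathcal{D}$. The key observation is that the coefficient \eqref{pesos} is designed precisely so that the product $\alpha_i(\theta)\,C\,G(\theta,\theta_i)\,F(u_{\theta_i})$ becomes independent of $i$: substituting \eqref{pesos} cancels the factor $G(\theta,\theta_i)F(u_{\theta_i})$ in the numerator against the same factor in the denominator, leaving the constant
$$
\alpha_i(\theta)\,C\,G(\theta,\theta_i)\,F(u_{\theta_i}) = \frac{C}{\displaystyle\sum_{j\in\mathcal{I}_\theta}\frac{1}{G(\theta,\theta_j)F(u_{\theta_j})}}
$$
for each $i\in\mathcal{I}_\theta$. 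I would then sum this identical term over the $k$ indices of $\mathcal{I}_\theta$ (recall $\vert\mathcal{I}_\theta\vert=k$), producing a factor of $k$ in the numerator, and finally rewrite $k/\sum(\cdots)$ as $1/(\tfrac{1}{k}\sum(\cdots))$ to match exactly the definition of $\varepsilon_\theta$ in the statement.

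I do not anticipate a genuine obstacle here, since once \eqref{eq:estimateA} is in hand the result is essentially algebraic; the only point requiring care is bookkeeping the cancellation and the cardinality $k$ of $\mathcal{I}_\theta$. The one conceptual check worth making explicit is that the coefficients \eqref{pesos} are well defined, i.e.\ that $G(\theta,\theta_i)F(u_{\theta_i})>0$ for every $i\in\mathcal{I}_\theta$ when $\theta\notin\mathcal{D}$, which is what makes the harmonic-mean expression $\varepsilon_\theta$ finite and the whole manipulation legitimate.
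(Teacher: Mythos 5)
Your proof is correct and follows essentially the same route as the paper's: starting from \eqref{eq:estimateA}, substituting the coefficients \eqref{pesos}, observing that each summand $\alpha_i(\theta)\,C\,G(\theta,\theta_i)\,F(u_{\theta_i})$ is constant over the $k$ indices of $\mathcal{I}_\theta$, and rewriting the resulting sum as the harmonic-mean expression $\varepsilon_\theta$. Your explicit handling of the trivial case $\theta\in\mathcal{D}$ and the positivity check ensuring \eqref{pesos} is well defined are minor additions that the paper leaves implicit.
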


We note that the error bound $ \varepsilon_\theta$ in Lemma~\ref{lem:errorest}, depends on the number $k$ of ``nearest neighbors" of the surrogate model. 

\begin{proof}[Proof of Lemma~\ref{lem:errorest}] We recall equation \eqref{eq:estimateA} from the derivation of the surrogate model,
	\begin{equation*}
		\sup_{v\in V, \Vert v \Vert_V \leq 1} \vert  \mathcal{A}_\theta ( \hat{u}_\theta,v)- \mathcal{L}(v) \vert 
		\leq  \sum_{i \in \mathcal{I}_\theta}   \alpha_i(\theta)  C G(\theta,\theta_i) F(u_{\theta_i}).
	\end{equation*}
By substitution of \eqref{pesos} from Definition~\ref{def:coeff} into the inequality's right hand side, we get
	\begin{align*}
		\sup_{v\in V, \Vert v \Vert_V \leq 1} \vert  \mathcal{A}_\theta ( \hat{u}_\theta,v)- \mathcal{L}(v) \vert 
		& \le \sum_{i\in \mathcal{I}_\theta} \frac{C}{\sum_{j\in \mathcal{I}_\theta} \frac{1}{G(\theta,\theta_j) F(u_{\theta_j})}}  = \frac{C}{\frac{1}{k}\sum_{j\in \mathcal{I}_\theta} \frac{1}{G(\theta,\theta_j) F(u_{\theta_j})}}. 
	\end{align*}
 Recalling $\varepsilon_\theta$'s definition, the result follows.
\end{proof}

\begin{theorem}\label{teorema1}
For $\theta$ given, let $u_\theta$ be the solution of \eqref{bilform} and $\hat{u}_\theta$ the corresponding surrogate model as in Definition~\ref{def:surrogate}. If we denote $\mathcal{A}_\theta$'s coercivity  constant by $c_\theta>0$, namely $c_\theta\Vert v \Vert^2_V \le  \mathcal{A}_\theta(v,v)$ for every $v\in V$, then 
\begin{equation*}
    \Vert \hat{u}_\theta - u_\theta \Vert_V \leq \frac{\varepsilon_\theta}{c_\theta},
\end{equation*}
where $\varepsilon_\theta$ is defined as in Lemma~\ref{lem:errorest}.
\end{theorem}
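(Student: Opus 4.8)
The plan is to combine the coercivity of $\mathcal{A}_\theta$ with the residual bound from Lemma~\ref{lem:errorest}, following the classical energy-estimate (Céa-type) argument. First I would set $w = \hat{u}_\theta - u_\theta \in V$ and, assuming $w \neq 0$ (the case $w = 0$ being trivial), test the coercivity inequality at $v = w$ to obtain $c_\theta \Vert w \Vert_V^2 \le \mathcal{A}_\theta(w,w)$.

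Next, I would expand $\mathcal{A}_\theta(w,w)$ using linearity in the first argument as $\mathcal{A}_\theta(\hat{u}_\theta, w) - \mathcal{A}_\theta(u_\theta, w)$, and then invoke that $u_\theta$ is the weak solution of \eqref{bilform} for the parameter $\theta$, so that $\mathcal{A}_\theta(u_\theta, w) = \mathcal{L}(w)$. This rewrites the right-hand side as the residual $\mathcal{A}_\theta(\hat{u}_\theta, w) - \mathcal{L}(w)$, which is precisely the quantity controlled by Lemma~\ref{lem:errorest}.

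The third step is to pass from the supremum over the unit ball in Lemma~\ref{lem:errorest} to the specific, unnormalized test function $w$. Since the map $v \mapsto \mathcal{A}_\theta(\hat{u}_\theta, v) - \mathcal{L}(v)$ is linear, applying the lemma to $w / \Vert w \Vert_V$ and rescaling yields $\vert \mathcal{A}_\theta(\hat{u}_\theta, w) - \mathcal{L}(w) \vert \le \varepsilon_\theta \Vert w \Vert_V$. Chaining the inequalities gives $c_\theta \Vert w \Vert_V^2 \le \varepsilon_\theta \Vert w \Vert_V$, and dividing through by $\Vert w \Vert_V > 0$ produces the claimed estimate $\Vert \hat{u}_\theta - u_\theta \Vert_V \le \varepsilon_\theta / c_\theta$.

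The argument is essentially routine once the residual is identified; the only points requiring care are the normalization step that converts the unit-ball supremum of the lemma into a bound on the unnormalized residual, and the separate treatment of the degenerate case $\Vert w \Vert_V = 0$. It is also worth flagging the implicit consistency that the \emph{same} bilinear form $\mathcal{A}_\theta$ furnishes both the coercivity constant $c_\theta$ and the residual bounded in Lemma~\ref{lem:errorest}, so that no mismatch in the parameter $\theta$ arises when the two estimates are combined.
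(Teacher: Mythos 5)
Your proposal is correct and follows essentially the same argument as the paper: coercivity applied to $\hat{u}_\theta - u_\theta$, bilinearity plus the weak-solution identity $\mathcal{A}_\theta(u_\theta, w) = \mathcal{L}(w)$ to expose the residual, then normalization to invoke Lemma~\ref{lem:errorest}. Your explicit handling of the degenerate case $\hat{u}_\theta = u_\theta$ is a minor point of extra care that the paper's normalization step leaves implicit, but it does not constitute a different route.
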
  
\begin{proof} By the coercivity and 
 bi-linearity of $\mathcal{A}_\theta$, we have that 
\begin{align*}
    \Vert \hat{u}_\theta- u_\theta \Vert_V^2 &\leq \frac{1}{c_\theta} \vert \mathcal{A}_\theta (\hat{u}_\theta- u_\theta, \hat{u}_\theta- u_\theta) \vert \\
    &= \frac{1}{c_\theta} \vert \mathcal{A}_\theta (\hat{u}_\theta, \hat{u}_\theta- u_\theta) - \mathcal{A}_\theta (u_\theta, \hat{u}_\theta- u_\theta) \vert \\
    & = \frac{1}{c_\theta} \vert \mathcal{A}_\theta (\hat{u}_\theta, \hat{u}_\theta- u_\theta) - \mathcal{L}(\hat{u}_\theta- u_\theta) \vert,  \end{align*}
where for the last line we used \eqref{bilform} with  $v=\hat{u}_\theta- u_\theta$.
By letting $w=\frac{\hat{u}_\theta- u_\theta}{\Vert \hat{u}_\theta- u_\theta \Vert_V }$ in the last inequality, we get
\begin{align*}
	\Vert \hat{u}_\theta- u_\theta \Vert_V^2 
	&\le  \frac{1}{c_\theta} \Vert \hat{u}_\theta - u_\theta \Vert_V \vert \mathcal{A}_\theta (\hat{u}_\theta, w) - \mathcal{L}(w) \vert \\
	& \leq  \frac{1}{c_\theta} \Vert \hat{u}_\theta - u_\theta \Vert_V \sup_{w \in V, \Vert w\Vert_V \leq 1} \vert \mathcal{A}_\theta (\hat{u}_\theta, w) - \mathcal{L}(w) \vert.
\end{align*}
Finally, by Lemma~\ref{lem:errorest}, we conclude that
$\Vert \hat{u}_\theta- u_\theta \Vert_V \leq \varepsilon_\theta/c_\theta$ as claimed. 
\end{proof}

The following lemma is crucial to our method because it shows we can make the surrogate's error small by making the design $\mathcal{D}$ finer. 

\begin{theorem} \label{theorem5}
Assume that $\mathcal{D}$ is a design in $\Theta$ satisfying the approximation property in Definition~\ref{Def:ApproxProperty} for $\epsilon > 0$, $\varepsilon_\theta$ is the error bound in Lemma~\ref{lem:errorest} for any given $\theta\in \Theta$, $C>0$, and $F: V\to {\mathbb R}$, where the latter two are the constant and function appearing on the right hand side of \eqref{modulodecontinuidad}. Let $C_F = \sup_{\theta' \in \Theta} F(u_{\theta'})$ and notice that by assumption $C_F < \infty$. 
Then, 
\begin{equation*}
\varepsilon_\theta \leq CC_F\epsilon.
\end{equation*}
\end{theorem}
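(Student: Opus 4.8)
The plan is to unwind the definition of $\varepsilon_\theta$ from Lemma~\ref{lem:errorest} and bound the governing quantity termwise using the two hypotheses at hand: the ($\epsilon$-)approximation property of $\mathcal{D}$ and the finiteness of $C_F$. Writing
$$
\varepsilon_\theta = \frac{C}{\frac{1}{k}\sum_{j\in \mathcal{I}_\theta} \frac{1}{G(\theta,\theta_j) F(u_{\theta_j})}},
$$
I note that $\varepsilon_\theta$ is exactly $C$ times the harmonic mean of the products $G(\theta,\theta_j) F(u_{\theta_j})$ over the $k$ indices $j\in\mathcal{I}_\theta$. This structure makes clear that a uniform upper bound on each product will propagate to a bound on $\varepsilon_\theta$, so the whole argument reduces to a single termwise estimate rather than to any averaging inequality.

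The key step is precisely that termwise estimate. For every $j\in\mathcal{I}_\theta$ the approximation property (Definition~\ref{Def:ApproxProperty}) gives $G(\theta,\theta_j)\le\epsilon$, while the definition $C_F=\sup_{\theta'\in\Theta}F(u_{\theta'})$ gives $F(u_{\theta_j})\le C_F$; multiplying yields $G(\theta,\theta_j) F(u_{\theta_j})\le \epsilon C_F$. Equivalently, each reciprocal satisfies $\tfrac{1}{G(\theta,\theta_j) F(u_{\theta_j})}\ge \tfrac{1}{\epsilon C_F}$, so averaging over the $k$ indices gives
$$
\frac{1}{k}\sum_{j\in \mathcal{I}_\theta} \frac{1}{G(\theta,\theta_j) F(u_{\theta_j})} \ge \frac{1}{\epsilon C_F}.
$$
Taking reciprocals (which reverses the inequality) and multiplying by $C>0$ then delivers $\varepsilon_\theta\le C\,\epsilon\,C_F$, which is the claim.

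I do not anticipate a genuine obstacle; the only items requiring care are bookkeeping ones. First, I must guarantee positivity of each factor $G(\theta,\theta_j) F(u_{\theta_j})$ so that the passage to reciprocals is legitimate, which holds for $\theta\notin\mathcal{D}$ where the coefficients in Definition~\ref{def:coeff} are well defined. Second, the degenerate case $\theta\in\mathcal{D}$ should be treated separately: there some $G(\theta,\theta_j)=0$, the corresponding reciprocal diverges, the averaged sum is $+\infty$, and the formula forces $\varepsilon_\theta=0\le CC_F\epsilon$, consistent with the surrogate being exact at design points. With these two cases separated, the estimate is immediate.
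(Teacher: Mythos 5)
Your proof is correct and takes essentially the same route as the paper's: the termwise bound $G(\theta,\theta_j)F(u_{\theta_j})\le \epsilon C_F$ from the approximation property and the definition of $C_F$, passage to reciprocals, averaging over the $k$ indices in $\mathcal{I}_\theta$, and inverting to get $\varepsilon_\theta \le C C_F \epsilon$. Your additional remarks on positivity and on the degenerate case $\theta\in\mathcal{D}$ (where the averaged sum diverges and $\varepsilon_\theta$ formally vanishes) are sound bookkeeping that the paper leaves implicit, but they do not alter the argument.
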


\begin{proof}
Assume $\mathcal{D} = \{ \theta_1,\ldots,\theta_n \}$ is the given design and $\theta\in \Theta$. Because of the approximation property, 
$G(\theta, \theta_j) \leq \epsilon $
for every $j  \in \mathcal{I}_\theta$, and since $F$ and $G$ are both positive, we have that
\begin{equation*} 
G(\theta,\theta_j)F(u_{\theta_j})\leq \epsilon F(u_{\theta_j})\leq \epsilon C_F ,\end{equation*}
for all $j\in \mathcal{I}_\theta$. Thus 
\begin{equation*}\frac{1}{G(\theta,\theta_j)F(u_{\theta_j})} \geq \frac{1}{C_F }\frac{1}{\epsilon}. \end{equation*}
By averaging over $j$ in $\mathcal{I}_\theta$, last equation yields
\begin{equation*} \frac{1}{k} \sum_{j\in \mathcal{I}_\theta} \frac{1}{G(\theta,\theta_j)F(u_{\theta_j})} \geq \frac{1}{k}\sum_{j\in \mathcal{I}_\theta} \frac{1}{C_F }\frac{1}{\epsilon} = \frac{1}{C_F }\frac{1}{\epsilon}. \end{equation*}
Finally, we conclude that  
\begin{equation*} \varepsilon_\theta = \frac{C}{\frac{1}{k} \sum_{j\in \mathcal{I}_\theta} \frac{1}{G(\theta,\theta_j)F(u_{\theta_j})}} \leq CC_F\epsilon, \end{equation*}
by recalling the definition of $\varepsilon_0$ in Lemma~\ref{lem:errorest}.
\end{proof}


\begin{corollary} \label{cor:surr_approx}
Under the assumption of Theorem~\ref{theorem5}, let $\theta\in\Theta$ and $\hat{u}_\theta$ be the corresponding surrogate solution as in Definition~\ref{def:surrogate}. Then, for any $\epsilon>0$, we have that:
\begin{enumerate}
\item[(a)] The error in the weak equation is controlled by
$$ \sup_{v\in V, \Vert v \Vert \leq 1} \vert \mathcal{A}_\theta(\hat{u}_\theta,v)-\mathcal{L}(v)\vert \leq CC_F\epsilon.$$
\item[(b)]\label{cor:surr_approx_b} The error between the actual weak solution $u_\theta$ and its surrogate is bounded by  
$$
\Vert \hat{u}_\theta - u_\theta \Vert_V \leq  \frac{C C_F}{c_\theta} \epsilon .
$$
\end{enumerate}

\end{corollary}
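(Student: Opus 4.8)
The plan is to prove both statements by directly chaining the three estimates already established: Lemma~\ref{lem:errorest}, Theorem~\ref{teorema1}, and Theorem~\ref{theorem5}. The key observation is that each bound requested in the corollary is obtained by composing an inequality phrased in terms of the quantity $\varepsilon_\theta$ with the uniform bound $\varepsilon_\theta \le CC_F\epsilon$ supplied by Theorem~\ref{theorem5}. Since the hypotheses of the corollary are stated to be exactly those of Theorem~\ref{theorem5} (the design $\mathcal{D}$ satisfies the $\epsilon$-approximation property, $C$ and $F$ are the constant and function from \eqref{modulodecontinuidad}, and $C_F=\sup_{\theta'\in\Theta}F(u_{\theta'})<\infty$), the bound on $\varepsilon_\theta$ is available for every $\theta\in\Theta$ and may be substituted freely.

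For part (a), I would start from Lemma~\ref{lem:errorest}, which gives, for the given $\theta$,
$$
\sup_{v\in V,\ \Vert v\Vert_V\le 1}\vert\mathcal{A}_\theta(\hat{u}_\theta,v)-\mathcal{L}(v)\vert \le \varepsilon_\theta.
$$
I would then invoke Theorem~\ref{theorem5} to replace $\varepsilon_\theta$ by its uniform upper bound $CC_F\epsilon$. Transitivity of the two inequalities yields the claim in (a) immediately.

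For part (b), I would instead begin from Theorem~\ref{teorema1}, which uses the coercivity of $\mathcal{A}_\theta$ with constant $c_\theta$ to convert the weak-equation residual control into the $V$-norm error estimate $\Vert\hat{u}_\theta-u_\theta\Vert_V\le \varepsilon_\theta/c_\theta$. Applying once more the substitution $\varepsilon_\theta\le CC_F\epsilon$ from Theorem~\ref{theorem5} and dividing through by $c_\theta$ produces the stated bound $\Vert\hat{u}_\theta-u_\theta\Vert_V\le (CC_F/c_\theta)\,\epsilon$.

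The main obstacle here is essentially bookkeeping rather than analysis: one must verify that the same surrogate $\hat{u}_\theta$, the same fixed $\theta$, the same design $\mathcal{D}$, and the same constants $C$, $C_F$, and $c_\theta$ appear consistently across Lemma~\ref{lem:errorest}, Theorem~\ref{teorema1}, and Theorem~\ref{theorem5}, so that the chain of inequalities is legitimate for a single object. Once this alignment of hypotheses is confirmed, both parts follow as transitive compositions of inequalities already proved, with no further estimation required.
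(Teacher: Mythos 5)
Your proof is correct and follows exactly the paper's own argument: part (a) chains Lemma~\ref{lem:errorest} with the bound $\varepsilon_\theta \le CC_F\epsilon$ from Theorem~\ref{theorem5}, and part (b) does the same starting from Theorem~\ref{teorema1}. The bookkeeping check you mention is indeed the only thing needed, since the corollary's hypotheses are those of Theorem~\ref{theorem5}.
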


\begin{proof}
The proofs of items (a) and (b) follow from Theorem \ref{theorem5} combined with Lemma~\ref{lem:errorest} and \ref{teorema1}, respectively.
\end{proof}

The first item in the last corollary implies that the underlying physics of the problem is approximately maintained, up to some error, as we aimed in our surrogate model derivation. The second item shows the convergence of the surrogate model approximation to the real solution as we take finer designs $\mathcal{D}$.
The latter result is crucial for the next section, where we study the existence and consistency between surrogate and real posterior distributions.

\subsection{Existence and consistency of the posterior distribution} \label{Sec:sec3.5}

This section presents the results on the existence
of the posterior distributions for the exact and surrogate models and their consistency. 

We begin by defining the Bayesian inference problem notation. Let $\mathcal{M}$ be the following regression model
\begin{equation}\label{modelootravez1}
  y_i =  \mathcal{H}_i(\mathcal{F}(\theta))+ \epsilon_i \quad \mbox{ with } \epsilon_i \sim N(0, \sigma^2),
\end{equation} 
for $i=1,\ldots,m$. The corresponding likelihood and marginal distributions are 
\begin{equation} \label{likelihood-fm}
P_{Y\mid \Theta} (y\mid \theta) = (2 \pi \sigma^2 )^{-m/2} \exp \left( -\frac{1}{2\sigma^2} \sum_{i=1}^m (y_i-\mathcal{H}_i(\mathcal{F}(\theta)))^2 \right),
\end{equation}
and
\begin{equation} \label{constnorm-fm}
P_{Y}(y) = \int P_{Y \mid \Theta}(y\mid \theta ) \pi(\theta) \, d\theta,
\end{equation} 
where $\pi(\theta)$ is the a priori distribution. 
By Bayes theorem, the posterior distribution of $\mathcal{M}$ for the parameter $\theta$ given the observations $y$, is  
\begin{equation}\label{posterior-fm}
P_{\Theta \mid Y}(\theta\mid y) = \frac{P_{Y\mid \Theta} (y\mid \theta) \pi(\theta)}{P_{Y}(y)}.
\end{equation}

By Definition~\ref{def:surrogate}, the corresponding surrogate regression model $\hat{\mathcal{M}}$ is \begin{equation}\label{modelootravez2}
  y_i = \mathcal{H}_i(\hat{\mathcal{F}}(\theta))+ \epsilon_i \quad \mbox{ with } \epsilon_i \sim N(0, \sigma^2),
\end{equation} 
for $i=1,\ldots,m$. Surrogate model's likelihood $\hat{P}_{Y\mid \Theta} (y\mid \theta)$, and marginal distribution $\hat{P}_{Y}(y)$ follow as above with straightforward changes. 
Hence, the surrogate's posterior distribution is
\begin{equation} \label{posterior-sm} 
\hat{P}_{\Theta \mid Y}(\theta\mid y) = \frac{\hat{P}_{Y\mid \Theta} (y\mid \theta) \pi(\theta)}{\hat{P}_{Y}(y)}.
\end{equation}

The existence and well-posedness of the posterior distributions follow from the regularity conditions on the map from parameters to observation, namely $\mathcal{H}\circ \mathcal{F}$. By the assumption in Section~\ref{sec:general}, the observation operator $\mathcal{H}$ is continuous; therefore, we are only concerned with the forward map regularity properties.
First, we analyze the exact model. The following lemma shows the continuity of the forward map ${\mathcal F}$.  

\begin{lemma} \label{CFM}
Under the assumptions in Section~\ref{sec:PBSMM}, the forward map $\mathcal{F}:\Theta \to V$ is continuous.
\end{lemma}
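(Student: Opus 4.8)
The plan is to fix an arbitrary point $\theta' \in \Theta$ and bound $\Vert u_\theta - u_{\theta'}\Vert_V$ by a quantity that vanishes as $\theta \to \theta'$, reusing essentially the same coercivity-plus-continuity argument that drives the proof of Theorem~\ref{teorema1}. Concretely, I would start from the coercivity of $\mathcal{A}_{\theta'}$ (guaranteed by the Lax--Milgram hypotheses), writing
$$
c_{\theta'} \Vert u_\theta - u_{\theta'}\Vert_V^2 \le \mathcal{A}_{\theta'}(u_\theta - u_{\theta'},\, u_\theta - u_{\theta'}),
$$
where $c_{\theta'}>0$ is the fixed coercivity constant at $\theta'$.

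The key algebraic step is to rewrite the right-hand side using the two weak formulations. Since $\mathcal{A}_{\theta'}(u_{\theta'}, w) = \mathcal{L}(w)$ and $\mathcal{L}(w) = \mathcal{A}_\theta(u_\theta, w)$ for every $w \in V$, bilinearity gives
$$
\mathcal{A}_{\theta'}(u_\theta - u_{\theta'}, w) = \mathcal{A}_{\theta'}(u_\theta, w) - \mathcal{A}_\theta(u_\theta, w).
$$
Taking $w = (u_\theta - u_{\theta'})/\Vert u_\theta - u_{\theta'}\Vert_V$ and applying the structural continuity estimate~\eqref{modulodecontinuidad} with $u = u_\theta$, together with the uniform bound $F(u_\theta) \le C_F$ coming from the boundedness assumption on $F$, I would obtain
$$
\Vert u_\theta - u_{\theta'}\Vert_V \le \frac{C\, C_F}{c_{\theta'}}\, G(\theta, \theta').
$$
Because $G(\theta, \theta') \to 0$ as $\theta \to \theta'$ by hypothesis, continuity at $\theta'$ follows, and since $\theta'$ was arbitrary, $\mathcal{F}$ is continuous on $\Theta$.

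The delicate point is keeping the estimate uniform in the right variable. The coercivity constant that appears must be the one at the fixed limit point $\theta'$, which is a fixed positive number and causes no trouble; the factor $F(u_\theta)$, however, involves the moving solution $u_\theta$ and therefore must be controlled uniformly — this is exactly where the hypothesis that $F$ is bounded on $V$ (equivalently, $C_F = \sup_{\theta''\in\Theta} F(u_{\theta''}) < \infty$) is essential, since otherwise the prefactor could blow up and swamp the decay of $G$. A minor caveat is that this argument yields only pointwise continuity rather than a single global modulus of continuity, because $c_{\theta'}$ may depend on $\theta'$; but pointwise continuity is precisely what the statement asks for.
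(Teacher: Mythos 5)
Your proof is correct and shares the paper's skeleton exactly: coercivity of the bilinear form, normalization of the difference $u_\theta-u_{\theta'}$, use of the two weak formulations to reduce everything to a difference of bilinear forms evaluated at a \emph{single} solution, and then the structural estimate \eqref{modulodecontinuidad} together with $G(\theta,\theta')\to 0$. The genuine (if minor) difference is which side you pin down. The paper coerces with $\mathcal{A}_\theta$ at the \emph{moving} point, so its constant $1/c_\theta$ varies and it must invoke uniform ellipticity via $\tilde c=\sup_{\theta\in\Theta}1/c_\theta<\infty$, while its continuity estimate lands on the fixed solution $u_{\theta'}$ (the paper's final line writes $F(u_\theta)$, but its intermediate step actually produces $F(u_{\theta'})$, which at a fixed limit point would not even require boundedness of $F$). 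You do the mirror image: coercivity at the fixed $\theta'$, so $c_{\theta'}$ is a single harmless number and no uniformity of the ellipticity is needed, but your estimate is applied to the moving solution $u_\theta$ via the identity $\mathcal{A}_{\theta'}(u_\theta-u_{\theta'},w)=\mathcal{A}_{\theta'}(u_\theta,w)-\mathcal{A}_\theta(u_\theta,w)$, which follows from $\mathcal{A}_{\theta'}(u_{\theta'},w)=\mathcal{L}(w)=\mathcal{A}_\theta(u_\theta,w)$ — and this is exactly why you must, and correctly do, invoke $C_F=\sup_{\theta''\in\Theta}F(u_{\theta''})<\infty$. Both uniformity hypotheses are assumed in Section~\ref{sec:PBSMM}, so either route is available; yours trades the uniform ellipticity bound for the uniform bound on $F$, and your closing remark correctly identifies this as the delicate point. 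Two small notes: you implicitly use the assumed symmetry of $G$ to pass from $G(\theta',\theta)$ to $G(\theta,\theta')$, which is fine; and the normalization $w=(u_\theta-u_{\theta'})/\Vert u_\theta-u_{\theta'}\Vert_V$ tacitly assumes $u_\theta\neq u_{\theta'}$, a vacuous case both your argument and the paper's leave unmentioned.
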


\begin{proof} 
Let $\theta,\theta'\in\Theta$. By the coercivity of the bi-linear form $\mathcal{A}_\theta$ we have that
\begin{equation*}
\Vert \mathcal{F}(\theta)-\mathcal{F}(\theta') \Vert_V^2 
= \Vert u_\theta-u_{\theta'} \Vert_V^2 
\leq \frac{1}{c_\theta} \vert \mathcal{A}_\theta(u_\theta-u_{\theta'},u_\theta-u_{\theta'}) \vert, 
\end{equation*}
where $c_\theta$ is the coercivity constant  $\mathcal{A}_\theta$.
Let  $v = \frac{u_\theta-u_{\theta'}}{\Vert u_\theta-u_{\theta'} \Vert_V}$, and use the bi-linearity of $\mathcal{A}_\theta$ to get 
\begin{equation}\label{eq:X1}
\Vert u_\theta-u_{\theta'} \Vert_V^2 
\le \frac{\Vert u_\theta-u_{\theta'} \Vert_V}{c_\theta} \left\vert \mathcal{A}_\theta \left( u_\theta, v\right) - \mathcal{A}_\theta\left( u_{\theta'}, v \right) \right\vert.
\end{equation}
Since $u_\theta$ and $u_{\theta'}$ are both solution, we have that 
$\mathcal{A}_\theta\left( u_{\theta'}, v \right) = \mathcal{L}(v) = \mathcal{A}_\theta\left( u_{\theta'}, v \right)$ for every $v\in V$ . By substitution of last equation into \eqref{eq:X1} and simplifying terms, we obtain 
$$
\Vert u_\theta-u_{\theta'} \Vert_V 
\leq \frac{1}{c_\theta} \vert \mathcal{A}_{\theta'} (u_{\theta'},v) - \mathcal{A}_\theta(u_{\theta'},v) \vert 
\leq \frac{1}{c_\theta} \sup_{w\in V , \Vert w \Vert_V \leq 1} \vert \mathcal{A}_{\theta'} (u_{\theta'},w) - \mathcal{A}_\theta(u_{\theta'},w) \vert. 
$$
In the last inequality, we have used $\Vert v\Vert_V =1$. Now, by the uniform ellipticity of $\mathcal{A}_\theta$, we have that $\tilde{c}= \sup_{\theta \in \Theta} \frac{1}{c_\theta} < \infty$. Recalling the continuity estimate \eqref{modulodecontinuidad} we get 
$$
\Vert \mathcal{F}(\theta)-\mathcal{F}(\theta') \Vert_V=
\Vert u_\theta-u_{\theta'} \Vert_V 
\leq \tilde{c}C G(\theta,\theta') F(u_\theta).
$$
Thus, since by assumption $F:V\to {\mathbb R}$ is bounded, the continuity of ${\mathcal F}$ follows from the continuity of $G$, as claimed. 
\end{proof}

Results like Lemma~\ref{CFM} are typical in the study of elliptic PDEs \cite[Chapter~14]{Gilbarg}. We include detailed proof using our particular assumptions and notation for completeness only. Now, as a consequence of Lemma~\ref{CFM} and Lemmas 1 and 2 from \cite{christenperezg}, the posterior measures $P_{\Theta\mid Y}$ exists and is well-defined as a probability measure.

\medskip

Regarding the surrogate model, the following lemma shows continuity at the design points. 
\begin{lemma} \label{Lemma1}
 The function $\theta\to \hat{u}_\theta$ is continuous at every point of $\mathcal{D}$. 
That is, $\hat{u}_\theta \to u_{\theta_i}$ as $\theta \to \theta_i$ for any  $\theta_i \in \mathcal{D}$.
\end{lemma}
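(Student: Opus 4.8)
Fix a design point $\theta_j\in\mathcal{D}$. By part (a) of Definition~\ref{def:surrogate} we have $\hat{u}_{\theta_j}=u_{\theta_j}$, so the only content of the statement is the limit $\hat{u}_\theta\to u_{\theta_j}$ taken along $\theta\to\theta_j$ with $\theta\notin\mathcal{D}$; the case $\theta\in\mathcal{D}$ is immediate because the finitely many design points are isolated. The plan is to show that, in the limit, the weight $\alpha_j(\theta)$ of Definition~\ref{def:coeff}(b) absorbs all the mass, that is $\alpha_j(\theta)\to 1$, and then to exploit the convex-combination structure of $\hat{u}_\theta$ to transfer this to the functions.

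First I would establish that $j\in\mathcal{I}_\theta$ once $\theta$ is close enough to $\theta_j$. By the standing hypothesis on $G$ we have $G(\theta,\theta_j)\to 0$ as $\theta\to\theta_j$, whereas for every other design point $\theta_i$ ($i\neq j$) the quantity $G(\theta,\theta_i)$ stays bounded away from zero, since distinct design points satisfy $G(\theta_i,\theta_j)>0$ and $G$ is continuous. Hence $\theta_j$ is eventually the $G$-nearest design point and $j\in\mathcal{I}_\theta$. I would then rewrite the coefficient as
\[
\alpha_j(\theta)=\left(1+\sum_{i\in\mathcal{I}_\theta,\,i\neq j}\frac{G(\theta,\theta_j)\,F(u_{\theta_j})}{G(\theta,\theta_i)\,F(u_{\theta_i})}\right)^{-1},
\]
and observe that each summand tends to $0$: the numerator carries the factor $G(\theta,\theta_j)\to 0$, while the denominators remain bounded below because $G(\theta,\theta_i)$ does not collapse and $F$ is bounded and bounded away from zero over the finitely many design solutions. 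Consequently $\alpha_j(\theta)\to 1$, and therefore $\sum_{i\neq j}\alpha_i(\theta)=1-\alpha_j(\theta)\to 0$.

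To pass from the weights to the functions, I would use $\sum_{i\in\mathcal{I}_\theta}\alpha_i(\theta)=1$ to write $\hat{u}_\theta-u_{\theta_j}=\sum_{i\in\mathcal{I}_\theta}\alpha_i(\theta)\,(u_{\theta_i}-u_{\theta_j})$, in which the $i=j$ term vanishes, and estimate
\[
\Vert \hat{u}_\theta-u_{\theta_j}\Vert_V\le \Big(\max_{1\le i\le n}\Vert u_{\theta_i}-u_{\theta_j}\Vert_V\Big)\sum_{i\neq j}\alpha_i(\theta)=M\,(1-\alpha_j(\theta))\to 0,
\]
where $M<\infty$ because only finitely many design solutions occur. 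This yields $\hat{u}_\theta\to u_{\theta_j}$ in $V$, as claimed.

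The main obstacle is that the index set $\mathcal{I}_\theta$ depends on $\theta$ and need not stabilize as $\theta\to\theta_j$: the $k-1$ neighbors other than $\theta_j$ may keep changing. The argument above sidesteps this precisely because each term $\alpha_i(\theta)$ with $i\neq j$ is controlled \emph{only} through the single quantity $1-\alpha_j(\theta)$, so the identity of the competing neighbors is irrelevant. The one structural input genuinely required is that $G$ be continuous and strictly positive off the diagonal, guaranteeing that the competing denominators $G(\theta,\theta_i)$ do not also tend to zero.
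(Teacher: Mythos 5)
Your proposal is correct and follows essentially the same route as the paper's proof: both isolate the $j$-th term in the coefficient, rewrite $\alpha_j(\theta)$ as $\bigl(1+\text{remainder}\bigr)^{-1}$ where the remainder carries the factor $G(\theta,\theta_j)\to 0$ against bounded competing terms, conclude $\alpha_j(\theta)\to 1$ with the other weights vanishing, and pass to convergence of $\hat{u}_\theta$ via the convex-combination structure. If anything, you are more careful than the paper, which leaves implicit your observations that $j\in\mathcal{I}_\theta$ for $\theta$ near $\theta_j$, that the competing sum remains bounded even though $\mathcal{I}_\theta$ need not stabilize, and the explicit norm estimate $\Vert \hat{u}_\theta-u_{\theta_j}\Vert_V\le M\,(1-\alpha_j(\theta))$.
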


\begin{proof}When $\theta \to \theta_i$ for any $\theta_i \in \mathcal{D}$ and $i = 1,\ldots,n$, then $G(\theta,\theta_i) \to 0$. Then
\begin{align*}
\lim_{\theta \to \theta_i} \alpha_i(\theta) &= \lim_{\theta \to \theta_i}  \frac{1}{\displaystyle G(\theta,\theta_i) F(u_{\theta_i}) \left( \sum_{j \in \mathcal{I}_\theta} \frac{1}{ G(\theta,\theta_j) F(u_{\theta_j})} \right)  } \\
&= \lim_{\theta \to \theta_i} \frac{1}{\displaystyle G(\theta,\theta_i) F(u_{\theta_i}) \left( \frac{1}{G(\theta,\theta_i)F(u_{\theta_i})} + \sum_{j \in \mathcal{I}_\theta, j\neq i} \frac{1}{ G(\theta,\theta_j) F(u_{\theta_j})} \right) } \\
&= \lim_{\theta \to \theta_i} \frac{1}{ \displaystyle  1 + G(\theta,\theta_i)F(u_{\theta_i}) \left(\sum_{j \in \mathcal{I}_\theta, j\neq i} \frac{1}{ G(\theta,\theta_j) F(u_{\theta_j})}  \right)}.
\end{align*}
Now,  since $\sum_{j \in \mathcal{I}_\theta, j\neq i} \frac{1}{ G(\theta,\theta_j) F(u_{\theta_j})} < \infty$ and $\lim_{x\to 0} \frac{1}{1+bx} = 1$ with $b$ constant we 
 have that 
 \begin{equation*}
  \lim_{\theta \to \theta_i} \alpha_i(\theta) = 1.
 \end{equation*}
 Hence, $\alpha_j(\theta)\to 0$ for every $j\in \mathcal{I}_\theta$ such that $j\neq i$ and consequently $\hat{u}_\theta \to u_{\theta_i}$ as $\theta \to \theta_i$.
\end{proof}

However, the continuity can no be extended all points in $\Theta$. Indeed, for any given design ${\mathcal D}$, there exist values $\theta\in\Theta$ where the set $\mathcal{I}_\theta$  of  $k$ ``nearest neighbors" is not uniquely defined. For instance, if $\Theta=[0,5]$ and ${\mathcal D}=\{1,2,3,4\}$, the sets of $k=3$  ``nearest neighbors" of $\theta=2.5$  are $\{1,2,3\}$ and $\{2,3,4\}$. 
Therefore, the surrogate jumps at these points, and the $\hat{\mathcal F}:\Theta\to V$ is not continuous by construction. The latter is an interesting property because it shows that continuity in the surrogate models' approximation is unnecessary to yield good performance in the inference process.  
In general, the existence of posterior distributions follow via milder regularity properties than continuity. Measurability properties on conditional densities are enough (for instance, see \cite[page 6]{Ghosal2017}). In the current setting, this amounts to the surrogate forward map $\hat{\mathcal F}:\Theta\to V$ being measurable, which by its definition indeed holds. Therefore, the posterior distribution for the surrogate model $\hat{P}_{\Theta \mid Y}$ exists and is well-defined.

\begin{proposition}\label{prop:consistency}
Let $P_{\Theta \mid Y}$ and  $\hat{P}_{\Theta \mid Y}$ be the posterior probability densities from the exact and surrogate models defined as above. Then, for any $\epsilon >0$, there exists a constant $K>0$ such that 
\begin{equation}\label{eqn:post_convergence}
\Vert P_{\Theta\mid Y} - \hat{P}_{\Theta\mid Y} \Vert_{TV} \leq \frac{K}{P_Y(y)} \epsilon ,
\end{equation}
where $P_Y(y)$ is the normalization constant in \eqref{constnorm-fm}.
\end{proposition}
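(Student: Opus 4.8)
The plan is to reduce the total variation bound to a uniform pointwise estimate on the two likelihoods and then transfer that control to the normalized posteriors. Write $L(\theta)=P_{Y\mid\Theta}(y\mid\theta)$, $\hat L(\theta)=\hat P_{Y\mid\Theta}(y\mid\theta)$, $Z=P_Y(y)=\int L(\theta)\pi(\theta)\,d\theta$ and $\hat Z=\hat P_Y(y)=\int \hat L(\theta)\pi(\theta)\,d\theta$. The elementary identity
\begin{equation*}
\frac{L}{Z}-\frac{\hat L}{\hat Z}=\frac{L-\hat L}{Z}+\hat L\,\frac{\hat Z-Z}{Z\hat Z},
\end{equation*}
together with $|\hat Z-Z|\le\int|L-\hat L|\,\pi\,d\theta$ and $\int \hat L\,\pi\,d\theta=\hat Z$, reduces the problem—after multiplying by $\pi$, integrating, and applying the triangle inequality—to the single estimate
\begin{equation*}
\Vert P_{\Theta\mid Y}-\hat P_{\Theta\mid Y}\Vert_{TV}\le \frac{C'}{Z}\int |L(\theta)-\hat L(\theta)|\,\pi(\theta)\,d\theta
\end{equation*}
for an absolute constant $C'$. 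Notice that $\hat Z$ cancels and only $Z=P_Y(y)$ survives in the denominator, matching the claimed bound and avoiding any need for a lower bound on the surrogate's marginal.

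Next I would establish a pointwise bound $|L(\theta)-\hat L(\theta)|\le K'\epsilon$ uniform in $\theta$. The only difference between the two likelihoods enters through $\mathcal H_i(u_\theta-\hat u_\theta)$, and since $\mathcal H_i$ is bounded, Corollary~\ref{cor:surr_approx}(b) gives
\begin{equation*}
|\mathcal H_i(\mathcal F(\theta))-\mathcal H_i(\hat{\mathcal F}(\theta))|\le \Vert \mathcal H_i\Vert\,\Vert u_\theta-\hat u_\theta\Vert_V\le \Vert \mathcal H_i\Vert\,\frac{CC_F}{c_\theta}\,\epsilon,
\end{equation*}
which is a $\theta$-independent multiple of $\epsilon$ once we invoke the uniform ellipticity bound $\tilde c=\sup_{\theta}1/c_\theta<\infty$ used in Lemma~\ref{CFM}. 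I would then view the Gaussian likelihood as $L(\theta)=(2\pi\sigma^2)^{-m/2}\phi(\mathcal G(\theta))$ with $\phi(z)=\exp(-\tfrac{1}{2\sigma^2}\sum_i(y_i-z_i)^2)$, and write $\hat{\mathcal G}(\theta)=\mathcal H(\hat{\mathcal F}(\theta))$ for the surrogate's predicted observation. A fundamental-theorem-of-calculus argument along the segment joining $\mathcal G(\theta)$ and $\hat{\mathcal G}(\theta)$, combined with the global Lipschitz property of $\phi$—whose gradient $\partial_{z_i}\phi(z)=\sigma^{-2}(y_i-z_i)\phi(z)$ is bounded because Gaussian decay dominates linear growth—yields $|L(\theta)-\hat L(\theta)|\le K'\epsilon$ uniformly in $\theta$.

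Finally, integrating this uniform bound against the prior gives $\int|L-\hat L|\,\pi\,d\theta\le K'\epsilon\int\pi\,d\theta=K'\epsilon$, and combining with the first step produces $\Vert P_{\Theta\mid Y}-\hat P_{\Theta\mid Y}\Vert_{TV}\le (C'K'/P_Y(y))\epsilon$, so the proposition holds with $K=C'K'$.

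The main obstacle I anticipate is the uniform Lipschitz control of the Gaussian likelihood, and more precisely checking that every constant entering $K'$—the operator norms $\Vert\mathcal H_i\Vert$, the uniform coercivity bound $\tilde c$, and the Lipschitz constant of $\phi$—is finite and independent of $\theta$. The boundedness of $z\mapsto(y_i-z_i)\phi(z)$ is exactly what upgrades the Gaussian from locally to \emph{globally} Lipschitz, and it is this global estimate that lets the pointwise bound survive integration over a possibly unbounded parameter range; the cancellation of $\hat Z$ in the first step is a convenient simplification that sidesteps any control of the surrogate normalization constant.
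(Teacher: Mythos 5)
Your proof is correct, and it follows the same mathematical route as the paper at the level of ideas: both pivot on Corollary~\ref{cor:surr_approx}(b), upgraded to a $\theta$-uniform bound via $\tilde c=\sup_{\theta\in\Theta}1/c_\theta<\infty$, and then transfer the forward-map error to the posteriors. The difference is in the second half: the paper's proof is two lines long and delegates the entire likelihood-to-posterior transfer to Theorem~2 of \cite{christenperezg}, whereas you prove that transfer from scratch. Your inline argument buys several things the citation leaves implicit. First, the decomposition $L/Z-\hat L/\hat Z=(L-\hat L)/Z+\hat L(\hat Z-Z)/(Z\hat Z)$ with the cancellation of $\hat Z$ makes visible that no lower bound on the surrogate's marginal $\hat P_Y(y)$ is needed, only on $P_Y(y)$ — exactly matching the form of \eqref{eqn:post_convergence}. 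Second, your observation that $z_i\mapsto(y_i-z_i)\phi(z)$ is bounded (so the Gaussian likelihood is \emph{globally} Lipschitz, with constant of order $\sigma^{-1}e^{-1/2}$ per coordinate) is the precise reason the pointwise bound survives integration over an unbounded $\Theta$; this is the analytic heart of the cited theorem. Third, your argument nowhere uses continuity of $\hat{\mathcal F}$ — only measurability, to make the integrals well defined — which independently substantiates the paper's closing remark that the continuity hypothesis in \cite{christenperezg} can be relaxed to measurability (recall from Section~\ref{Sec:sec3.5} that $\hat{\mathcal F}$ is genuinely discontinuous at ties in the nearest-neighbor ordering, so this relaxation is not cosmetic). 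Two small accounting points, neither a gap: the factor $C'$ in your first step is $2$ (or $1$, depending on the TV normalization) and is harmlessly absorbed into $K$; and the finiteness of $\Vert\mathcal H_i\Vert$ that you flag as an obstacle is an assumption the paper itself makes in Section~\ref{sec:general} (pointwise evaluation is not actually bounded on $H^1$ of a planar domain, but you and the paper stand on identical footing there).
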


\begin{proof} Let $\epsilon>0$ as given in the assumptions, and a $\mathcal{D}$ a design satisfying $\epsilon$-approximation property. By Corollary \ref{cor:surr_approx}, we have\begin{equation}\label{eq:errorModels}\Vert \hat{u}_\theta - u_\theta \Vert_V \leq \tilde{c}CC_F \epsilon.\end{equation}where $\tilde{c} = \sup_{\theta \in \Theta} \frac{1}{ c_\theta} < \infty$, $C_F=\sup_{\theta' \in \Theta} F(u_{\theta'}) < \infty$, and $C$ is the constant on the right hand side of \eqref{modulodecontinuidad}. 
By \eqref{eq:errorModels} and Theorem 2 in \cite{christenperezg}, we conclude \eqref{eqn:post_convergence} as claimed. Note, the assumptions in Theorem 2  in \cite{christenperezg}  can be relaxed from continuity to existence and measurability of the surrogate's forward map.
\end{proof}

Proposition~\ref{prop:consistency} shows that the posterior's order of convergence in TV norms is inherited from the FM's order of convergence—a well-known property in BUQ problems. 

\subsection{Algorithm implementation}
The algorithmic implementation of the physically based surrogate models is relatively straightforward. It consists of a preprocessing, presented in Algorithm~\ref{alg:preproc}, and the surrogate model evaluation, presented in Algorithm~\ref{alg:MCMC}. The latter algorithm is the only part that enters open-loop applications such as Bayesian inference through an MCMC exploration algorithm. 

\begin{algorithm} [H] 
\caption{Preprocessing}
\label{alg:preproc}
\begin{algorithmic}
\STATE{\textbf{Data:} a design $\mathcal{D} = \{ \theta_1,\ldots,\theta_n \}$ in $\Theta$}
\FOR{$\theta_i \in \mathcal{D}$} 
\STATE{Evaluate FM $u_{\theta_i} = {\mathcal F}(\theta_i)$}
\STATE{Compute $\mathcal{H}_j(u_{\theta_i})$ for $j=1,\ldots,m$}
\STATE{Set $A_i = \{ \mathcal{H}_j(u_{\theta_i}) \}_{j=1}^m $}
\ENDFOR
\RETURN $A = \{ A_1,\ldots, A_n \}$
\end{algorithmic}
\end{algorithm}

The preprocessing step is the most computationally expensive part because each surrogate element computation involves an exact forward map numerical evaluation. For elliptic PDE models, a typical choice would be 
the finite element method (like the one in our examples below). In addition, if the surrogate elements $\{u_{\theta_i}\}_{i=1}^{n}$ are given by large vectors, storing only $\{{\mathcal H}({\mathcal F}(\theta_i))\}_{i=1}^n$  saves memory space and implies less computations. 

\begin{algorithm} [H]
\caption{Surrogate-Model evaluation}
\label{alg:MCMC}
\begin{algorithmic}
\STATE{\textbf{Data:} $\theta \in \Theta$, $A$, $G$, $k$ and $\eta$}
\IF{$G(\theta,\theta_i) < \eta$ for some $\theta_i \in \mathcal{D}$} 
\STATE{$S = A_i$}
\ELSE
\STATE{Compute $\mathcal{I}_\theta$ such that $\vert \mathcal{I}_\theta \vert =k$}
\FOR{$i \in \mathcal{I}_\theta$}
\STATE{Compute $\alpha_i(\theta)$}
\ENDFOR
\STATE{$S = \sum_{i \in \mathcal{I}_\theta} \alpha_i(\theta) A_i$}
\ENDIF
\RETURN $S$
\end{algorithmic}
\end{algorithm}

The parameter $\eta$ sets a neighborhood around each element of $\mathcal{D}$ to avoid small number division errors in the coefficient computations. We used $\eta = 10^{-8}$ in examples of section \ref{sec:examples}. The computational cost of the surrogate model evaluations is small, at least when the parameters are low dimensional spaces. For larger dimensional parameter spaces computing the  $k$ of ``nearest neighbors"  can be computationally expensive and challenge the method's efficiency.

The following section presents some examples to test the efficiency and efficacy of our Physics-Based Surrogate Model for some low-dimension regression examples. We used a finite element method for the elliptic PDE numerical solution in the preprocessing algorithm, and FreeFem \cite{freefem} for its implementation. For the MCMC exploration, we used t-walk algorithm \cite{twalk} implemented in Python 3. Moreover, separate implementations are possible for preprocessing step and surrogate evaluations since the complete algorithm does not require online communication between both parts. Finally, we note that examples in Section~\ref{sec:examples} were implemented on a personal computer with a 4.70 GHz processor.


\section{Examples} \label{sec:examples}

In this section weIn this section, we present examples of the method for a problem in electrical impedance tomography (EIT), a medical imaging technique in which an image of the conductivity (or permittivity) of part of a body is determined from electrical surface measurements. In the mathematical literature, EIT inverse problem is also known as Calderón's problem \cite{Calderon-2006}. The examples below demonstrate the surrogate models' efficiency and efficacy for one or two parameter inverse problems. We use synthetic data and, when available, the exact analytical solution to asses the method's performance.

\subsection{EIT problem on a disk}\label{EIT:disk}
Let $B \subset \mathbb{R}^2$ be the unit disk, denote its boundary by $\partial B$, and assume $\lambda: B \to \mathbb{R}$ and $f: \partial B \to \mathbb{R}$ are real-valued functions. The direct problem is to find $u$ such that
\begin{equation}\label{problem1}
    \mbox{div} (\lambda \nabla u) = 0 \quad \mbox{ in } B, \quad \lambda \frac{\partial u}{\partial n} = f \quad \mbox{ on } \partial B.
\end{equation}  
We need further hypotheses to ensure the existence and uniqueness of solutions to \eqref{problem1}. First, $\lambda \in L^\infty (B)$ and there exists $\lambda_0>0 $ such that $\lambda(x) > \lambda_0$ for every $x\in B$, namely problem~\eqref{problem1} is uniformly elliptic.  Second, by the divergence theorem, solutions to \eqref{problem1} exist if and only if 
\begin{equation}\label{consistent}
\int_{\partial B} f \, ds = 0.
\end{equation}
Third, solutions are unique up to an additive constant. To get rid of this degeneracy, we are concerned only with solutions $u$ such that 
$$ \int_{\partial B} u \, ds = 0.$$
In view of the latter assumptions, we introduce the following functional spaces
$$
H^\frac12_0(\partial B) = \left\{ f \in H^\frac12(\partial B) \, : \, \int_{\partial B} f \, ds = 0 \right\}
\text{ and }
 H_0^1 (B) = \left\{ u \in H^1(B) \, :  \, \int_{\partial B} u \, ds = 0 \right\}, $$
where the norm for the latter space is given by   $\Vert u \Vert_{H^1_0(B)} = \Vert \nabla u \Vert_{L^2}$ (for instance, see \cite{Gilbarg}). 

The weak form of problem~\eqref{problem1} is given by 
$$ {\mathcal A}(u,v):=\int_B \lambda \nabla u \cdot \nabla v \, dx = \int_{\partial B} v f \, ds =: {\mathcal L}(v)\quad \mbox{ for all } v \in H_0^1(B) .$$
With above notations, ${\mathcal A}:H^1_0(B)\times H^1_0(B)\to {\mathbb R}$ is a continuous and coercive bi-linear form and ${\mathcal L}: H_0^1 (B) \to {\mathbb R}$ is a linear functional. Therefore, existence and uniqueness follows from the Lax-Milgram theorem.

\subsection*{Explicit exact solutions} \label{asftfm}
We derive an explicit exact solution to problem~\eqref{problem1} in the case where the conductivity has the form 
\begin{equation} \label{cond1}
\lambda (x) = \left\{ \begin{array}{ll}
1     &  \mbox{ if }R<\vert x \vert <1, \\
1+\rho     & \mbox{ if } \vert x \vert < R.  
\end{array} \right.
\end{equation} 
That is a disk of radius one of constant conductivity $\lambda = 1 $ with a disk-centered inclusion of radius $0<R<1$ and conductivity $\lambda = 1 + \rho$. 
In this case, the explicit solution of $u$ on  $\partial B$  is given by 
\begin{equation}\label{eq:explictsol}
u(1,\theta) = \sum_{n > 0} \frac{\alpha-R^{2^\vert n \vert}}{\alpha+R^{2\vert n \vert}} \frac{f_n}{\vert n \vert} \cos(n\theta), \quad \theta \in [0, 2\pi], 
\end{equation}
where $\alpha = 1+2/\rho$ and $f_n$ is the cosine series coefficient (see \cite{Kirsch}).  For the examples below,  we assume that $f(\theta) = \cos(4 \theta)$, and 
by substituting the latter into \eqref{eq:explictsol}, we get 
\begin{equation}\label{eq:explicit}
u(1,\theta) = \frac{1}{4}\cdot \frac{\alpha- R^{8}}{\alpha + R^{8}}  \cos(4\theta), \quad \theta \in [0,2\pi].
\end{equation}

We use \eqref{eq:explicit} in the first two inverse problems presented below to test the Physic-Based surrogate model efficacy to obtain acceptable posterior estimates. 

\subsection{Inference on the radious and conductivity value}
This section presents two examples of Bayesian inference to solve an inverse problem for one unknown parameter. In both cases, the exact forward map is available to assess the efficacy of the Physics-Based surrogate model. A comparison with a brute force approach with the numerical FEM implementation measures the surrogate's model efficiency.

\subsubsection{Inferring the conductivity value} \label{ICV}

\subsubsection*{Inverse Problem}
The first inverse problem is inferring a disk radial inclusion's conductivity value $\rho$ from a series of potential measurements $u(x)$ on $m$ boundary points $\{x_i\}_{i=1}^m\subset \partial B$. We assume that the inclusion radius $R$ is known.  To stress that the unknown parameter is $\rho$, we use the notation $\lambda_\rho$ for the conductivity function $\lambda$ in this example.

\subsubsection*{Bilinear's Form Continuity Estimate} In this case, the bilinear form  is 
$$ \mathcal{A}_\rho (u,v) = \int_B \lambda_\rho \nabla u \cdot \nabla v \, dx$$
for every  $u,v \in H_0^1 (B)$, and
the continuity estimate \eqref{modulodecontinuidad}
becomes
\begin{align*}
    \sup_{v\in V, \Vert v \Vert_V \leq 1}  \vert \mathcal{A}_\rho ( u ,v )- \mathcal{A}_{\rho'} ( u, v ) \vert &\leq  \sup_{v\in V, \Vert v \Vert_V \leq 1}   \int_B \vert\lambda_\rho-\lambda_{\rho'}\vert \vert \nabla u \cdot \nabla v \vert \, ds \\
    & \leq \Vert \lambda_\rho - \lambda_{\rho'} \Vert_{L^\infty (B)} \Vert \nabla u \Vert_{L^2(B)} \sup_{v\in V, \Vert v \Vert_V \leq 1}  \Vert \nabla v \Vert_{L^2(B)} \\
    & \leq \Vert \lambda_\rho - \lambda_{\rho'} \Vert_{L^\infty (B)} \Vert \nabla u \Vert_{L^2(B)}, 
\end{align*}
where we used Cauchy-Schwarz's inequality for the second line. Comparing latter inequality with \eqref{modulodecontinuidad}, we get 
$$
C=1,\quad G(\rho,\rho') = \vert \rho - \rho' \vert, \text{ and } F(u)= \Vert \nabla u \Vert_{L^2(B)}.
$$
where we have used that
$\Vert \lambda_\rho - \lambda_{\rho'} \Vert_{L^\infty (B)} = \vert \rho - \rho' \vert$
because of equation \eqref{cond1}.

\subsubsection*{Design}
Let $\Theta = [0,10]$ be the admissible parameters set. We follow the procedure in Section~\ref{sec:design} to construct a design ${\mathcal D}$  satisfying the $\varepsilon$-approximation 
property on $\Theta$. Because of $G(\rho,\rho') = \vert \rho-\rho'\vert$, we identify $g(x)=x$, $d(a,b)=\vert a-b\vert$ and $f(\rho) = \rho$ in the structure hypothesis \eqref{eq:Gstructure}.  The metric space is $L={\mathbb R}$ equipped with the Euclidean distance $d$, and $f(\Theta)=[0,10] \subset L'=L$. As $[0,10]$ is a bounded set in ${\mathbb R}$ it is totally bounded, i.e., for every $\varepsilon >0$, there exists a finite cover such that the radius of each element cover is at most $\varepsilon$. Since $L'$ is a segment on the real line, the construction of the design ${\mathcal D}$ is straightforward; let $\mathcal{D} = \{ 10k/2^{l}\}_{k=0}^{2^l}$. Consequently, $\mathcal{D}$ will satisfy the $\varepsilon$-approximation of Definition~\ref{Def:ApproxProperty} provided $l \ge  \ln\left(\frac{1-\varepsilon}{\varepsilon}\right) / \ln(2)$. Notice that the number of elements in the design is $n=2^{l}+1$, which is also the number of exact forward map evaluations in the preprocessing step.   

\subsubsection*{Surrogate Model}

Given the design ${\mathcal D} = \{ \rho_1, \ldots, \rho_n \}$ as above, the surrogate approximation is given by 
$$
\hat{u}_\rho = \sum_{i\in \mathcal{I}_\rho} \alpha_i(\rho) u_{\rho_i},
$$
where 
$$  \alpha_i(\rho) =  \frac{1}{\displaystyle \vert \rho-\rho_i \vert \Vert \nabla u_{\rho_i} \Vert_{L^2(B)} \left( \sum_{j\in \mathcal{I}_\rho} \frac{1}{ \vert \rho-\rho_j \vert \Vert \nabla u_{\rho_i} \Vert_{L^2(B)}} \right) },$$
and constructing the $k$-nearest neighbors index set $\mathcal{I}_\rho$ is straightforward.   

\subsubsection*{Error Estimates}
From above definitions, the  bilinear's form  error is estimated by 
\begin{align*}
    \sup_{v\in V, \Vert v \Vert_V \leq 1} \vert  \mathcal{A}_\rho ( \hat{u}_\rho,v)- \mathcal{L}(v) \vert &\leq  \frac{1}{\displaystyle \frac{1}{k}  \sum_{j\in \mathcal{I}_\rho} \frac{1}{\vert \rho - \rho_j \vert \Vert \nabla u_{\rho_j} \Vert_{L^2(B)}} } = \varepsilon_\rho.
\end{align*}

\subsubsection*{Numerical Results}
For the numerical test we let $R= 0.85$ and take $m=10$ measurement points $x_i = (\cos(2\pi(i/10)), \sin(2\pi(i/10)))$ for $i \in\{ 0\ldots, 9\}$. We implemented the exact forward numerical FEM scheme in Freefem on a triangular mesh with $56,206$ elements and  $111,610$ vertices. The latter implementation generated the synthetic measurement for $\rho=3.2$ and computed $F(u_\rho)$ for the surrogate coefficients. In addition, we let $\sigma = 0.01$ as the measurement noise standard deviation in \eqref{modelootravez1}.
We assume a uniform apriori distribution supported on $[0,10]$ and implemented MCMC algorithm t-walk to get $800,000$ samples from the posterior distribution. Finally, we considered $k=2$ in the nearest neighbors' computation.

Table~\ref{table:1} compares preprocessing, MCMC exploration, and total times for different numbers $n$ of surrogate elements. The last line reports the sum of all executions for every value of $n$.

\begin{table}[H]
\centering
\caption{Column $n$ is the number of solutions in the surrogate model, column a) is the time to compute all surrogate elements, $u_{\theta_i}$, in FreeFem, column  b) is the time required for the MCMC exploration with $800,000$ iterations, column  c) is the total time required, and column $\varepsilon_\rho$ is the error estimate. The sum in the last line is the time required to compute all models. }
\begin{tabular}{c c c c c} 
 \hline
 \textbf{n} & \textbf{a)} & \textbf{b)} & \textbf{c)} & \textbf{$\varepsilon_\rho$} \\  
 \hline
$n = 5$ & 0.769 min  & 2.2 min  & 2.969 min & $30/2^2$ \\
$n = 9$ & 1.369 min & 2.4 min & 3.769 min  & $30/2^3$\\
$n = 17$ & 2.567 min & 3.3 min & 5.867 min  & $30/2^4$ \\
$n = 33$  & 4.970 min & 3.5 min & 8.470 min & $30/2^5$ \\
 \hline
 & & Sum = & 21.075 min &
\end{tabular}
\label{table:1}
\end{table}

The brute force approach of using the exact forward map for $800,000$ iterations would take approximately 78 days. Therefore, the time to sample the posterior distribution with $n=33$ surrogate elements is about $10^4$ times faster. Even performing all experiments for different values of $n$ is four orders of magnitude faster than to use the exact forward map.

\begin{figure}[!ht]
\centering
\begin{subfigure}{0.49\textwidth}
\centering
\includegraphics[width=\textwidth]{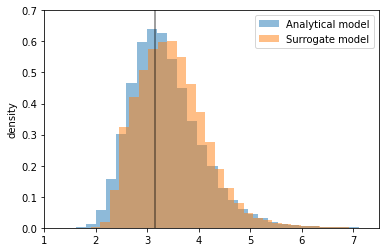}
\caption[]%
{{\small $n=5$}}    
\label{fig:mean and std of net14}
\end{subfigure}
\hfill
\begin{subfigure}{0.49\textwidth}  
\centering 
\includegraphics[width=\textwidth]{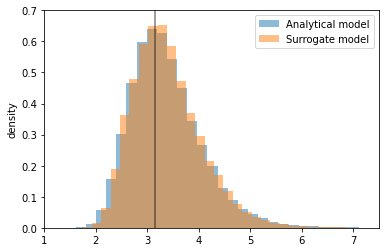}
\caption[]%
{{\small $n=9$}}    
\label{fig:mean and std of net24}
\end{subfigure}
\vskip\baselineskip
\begin{subfigure}{0.49\textwidth}   
\centering 
\includegraphics[width=\textwidth]{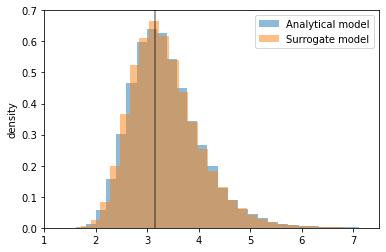}
\caption[]%
{{\small $n=17$}}    
\end{subfigure}
\hfill
\begin{subfigure}{0.49\textwidth}   
\centering 
\includegraphics[width=\textwidth]{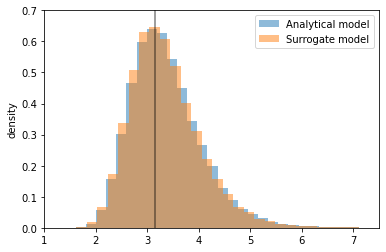}
\caption[]%
{{\small $n=33$}}    
\label{fig:mean and std of net44}
\end{subfigure}
\caption[]
{\small  \label{fig:infer_rho}  Histograms of the MCMC samples for the posterior distribution with four designs, increasing the number of elements in our design $\mathcal{D}$.  A binary partition of $[0,10] = \Theta$ was used. Convergence to exact posterior is observed with few model evaluations. The vertical line is the true value of $\rho$.} 
\label{fig:mean and std of nets}
\end{figure}

Figure~\ref{fig:infer_rho} presents the posterior distributions for different numbers of surrogate elements $n$  and a burn-in of
10,000 samples, after inspection of the MCMC output in each case. We included a sample of the posterior distribution obtained after $800,000$ iterations with the exact analytical solution \eqref{eq:explicit}  for comparison. 
In Figure~\ref{fig:infer_rho} (a) to (d), we can see how $\hat{P}_{\Theta\mid Y}$ approaches to the true posterior distribution $P_{\Theta\mid Y}$ as $n$ increases. Some differences between the posteriors sampled with the analytical solution and the surrogate model are still present for $n=5$. For $n=17$, similarities between both posterior distributions are remarkable.   Proposition~\ref{prop:consistency} assures the convergence of $\hat{P}_{\Theta\mid Y}$ to the true posterior $P_{\Theta\mid Y}$, as $n$ increases.

\subsubsection{Inferring the Conductivity Support}\label{exmaple:ICS}

\subsubsection*{Inverse Problem} The second inverse problem is inferring a centered disk inclusion's radius $R$  from a series of potential measurements $u(x)$ on $m$ boundary points $\{x_i\}_{i=1}^m\subset \partial B$. We assume that the inclusion conductivity $\rho$ is known. To stress that the unknown parameter is $R$, we use the notation $\lambda_R$ for the conductivity function $\lambda$ in this example.

\subsubsection*{Bilinear's Form Continuity Estimate} In this case, the bilinear form  is 
$$ \mathcal{A}_R (u,v) = \int_B \lambda_R \nabla u \cdot \nabla v \, dx$$
for every  $u,v \in H_0^1 (B)$, and
the continuity estimate \eqref{modulodecontinuidad}
becomes
\begin{align*}
    \sup_{v\in V, \Vert v \Vert_V \leq 1}  \vert \mathcal{A}_R ( u,v )- \mathcal{A}_{R'} ( u, v ) \vert &\leq  \sup_{v\in V, \Vert v \Vert_V \leq 1}   \int_B \vert\lambda_R-\lambda_{R'}\vert \vert \nabla u \cdot \nabla v \vert \, dx \\
    & \leq  \Vert \lambda_R - \lambda_{R'} \Vert_{L^4(B)} \Vert \nabla u \Vert_{L^4(B)} \sup_{v\in V, \Vert v \Vert_V \leq 1}  \Vert \nabla v \Vert_{L^2(B)} \\
    & = \Vert \lambda_R - \lambda_{R_i} \Vert_{L^4(B)} \Vert \nabla u_{R_i} \Vert_{L^4(B)} \\
    & \leq (2\pi)^{1/4} \rho \vert R-R' \vert^{1/4} \Vert \nabla u_{R_i} \Vert_{L^4(B)} .
\end{align*}
where we used H\"older's inequality twice and equation \eqref{cond1} for the second and fourth lines, respectively. Comparing latter inequality with \eqref{modulodecontinuidad}, we get 
$$
C=(2\pi)^{1/4}\rho,\quad G(R,R') = \vert R-R' \vert^{1/4}, \text{ and } F(u)= \Vert \nabla u \Vert_{L^4(B)}.
$$

\subsubsection*{Design}
Let $\Theta = [0,1]$ be the admissible parameters set. We follow the procedure in Section~\ref{sec:design} to construct a design ${\mathcal D}$  satisfying the $\varepsilon$-approximation 
property on $\Theta$. Because of $G(R,R')=g(d(f(R),f(R')))  = \vert R-R'\vert^{1/4}$, we identify $g(x)=x^{1/4}$, $d(a,b)=\vert a-b\vert$ and $f(R) = R$ in the structure hypothesis \eqref{eq:Gstructure}.  The metric space is $L={\mathbb R}$ equipped with the Euclidean distance $d$, and $f(\Theta)=[0,1] \subset L' = L$. As $[0,1]$ is a bounded set in ${\mathbb R}$ it is totally bounded, and the construction of the design ${\mathcal D}$ is straightforward; let $\mathcal{D} = \{ k/2^{l}\}_{k=0}^{2^l}$. Consequently, $\mathcal{D}$ will satisfy the $\varepsilon$-approximation of Definition~\ref{Def:ApproxProperty} provided $l \ge  \ln\left(\frac{1-\varepsilon}{\varepsilon}\right) / \ln(2)$. The number of elements in the design is $n=2^{l}+1$.   

\subsubsection*{Surrogate Model}

Given the design ${\mathcal D} = \{ R_1, \ldots, R_n \}$ as above, the surrogate approximation is   
$$
\hat{u}_\rho = \sum_{i\in \mathcal{I}_R} \alpha_i(R) u_{R_i},
$$
where 
$$ \alpha_i(R)=  \frac{1 }{\displaystyle \vert R-R_i \vert^{1/4} \Vert \nabla u_{R_i} \Vert_{L^4(B)} \left( \sum_{j\in \mathcal{I}_R} \frac{ 1 }{ \vert R-R_j \vert^{1/4} \Vert \nabla u_{R_j} \Vert_{L^4(B)}} \right)  }.$$

\subsubsection*{Error Estimates}
From above definitions, the  bilinear's form  error is estimated by 
\begin{equation*}
    \sup_{v\in V, \Vert v \Vert_V \leq 1} \vert  \mathcal{A}_R ( \hat{u}_R,v)- \mathcal{L}(v) \vert \leq  \frac{(2\pi)^{1/4}\rho }{\displaystyle \frac{1}{k}  \sum_{j\in \mathcal{I}_R} \frac{1}{\vert R - R_j \vert^{1/4} \Vert \nabla u_{R_j} \Vert_{L^4(B)}} } = \varepsilon_R.
\end{equation*}

\subsubsection*{Numerical Results}
For the numerical test we let $\rho = 6$ and take $m=10$ measurement points $x_i = (\cos(2\pi(i/10)), \sin(2\pi(i/10)))$ for $i \in\{ 0\ldots, 9\}$. We use the FEM implementation of the first example with a triangular mesh of $56,206$ elements and  $111,610$ vertices. The latter implementation generated the synthetic measurement for $R_0=0.725$ and computed $F(u_R)$ for the surrogate coefficients. In addition, we let $\sigma = 0.01$ as the measurement noise standard deviation in \eqref{modelootravez1}.
We assume a uniform apriori distribution supported on $[0,1]$ and implemented MCMC algorithm t-walk to get $800,000$ samples from the posterior distribution. Finally, we considered $k=2$ in the nearest neighbors' computation.

Table~\ref{table:2} compares preprocessing, MCMC exploration, and total times for different numbers $n$ of surrogate elements. The last line reports the sum of all executions for every value of $n$.

\begin{table}[H]
\centering
\caption{\label{table:2}Column $n$ is the number of solutions in the surrogate model, column a) is the time to compute all surrogate elements, $u_{\theta_i}$, in FreeFem, column  b) is the time required for the MCMC exploration with $800,000$ iterations, column  c) is the total time required, and column $\varepsilon_\rho$ is the error estimate. The sum in the last line is the time required to compute all models.}
\begin{tabular}{l c c c c}
\hline
\textbf{n} & \textbf{a)} & \textbf{b)} & \textbf{c)} & \textbf{$\epsilon$}\\
\hline
n = $5$ & 1.034 min  & 2.1 min  & 3.134 min & $(3/2^2)^{1/4}$ \\
n = $9$ & 1.466 min & 3.1 min & 4.466 min & $(3/2^3)^{1/4}$\\
n = $17$ & 2.622 min & 3.3 min & 5.922 min & $(3/2^4)^{1/4}$\\
n = $33$  & 5.038 min & 3.4 min & 8.438 min & $(3/2^5)^{1/4}$ \\
n = $65$  & 10.524 min & 3.4 min & 13.924 min & $(3/2^6)^{1/4}$ \\
n = $129$  & 21.005 min & 3.4 min & 24.405 min & $(3/2^7)^{1/4}$ \\
\hline
 & & Sum = & 60.289 min &
\end{tabular}
\end{table}

As before, the brute force approach of using the exact forward map for $800,000$ iterations would take approximately 71 days. Therefore, the time to sample the posterior distribution with $n=129$ surrogate elements is about $5\cdot 10^3$ times faster, and performing all experiments for different values of $n$ is about $1,800$ times faster.

Figure~\ref{fig:infer_R_2} presents the posterior distributions for different numbers of surrogate elements $n$ and a burn-in of
10,000 samples, after inspection of the MCMC output in each case.  We included a sample of the posterior distribution obtained after $800,000$ iterations with the exact analytical solution \eqref{eq:explicit}  for comparison. 
In Figure~\ref{fig:infer_R_2} (a) to (f), we can see how $\hat{P}_{\Theta\mid Y}$ approaches to the true posterior distribution $P_{\Theta\mid Y}$ as $n$ increases. In contrast with the results in example~\ref{ICV}, the convergence of P1 to P2 is slower. For $n=33$ the surrogate's posterior $\hat{P}_{\Theta\mid Y}$ present significant differences with the analytical model posterior $\hat{P}_{\Theta\mid Y}$.We obtain an acceptable result only for $n=129$. Moreover, from $n=65$ to $n=129$, we observe very little change in the posteriors $\hat{P}_{\Theta\mid Y}$ indicating convergence. 
Because Proposition~\ref{prop:consistency} assures convergence and computation time is small, running for increasing values of $n$ until $\hat{P}_{\Theta\mid Y}$ converges is a sensible strategy to find the true posterior.

\begin{figure}[!htb]
\centering
\begin{subfigure}{0.49\textwidth}
\centering
\includegraphics[width=\textwidth]{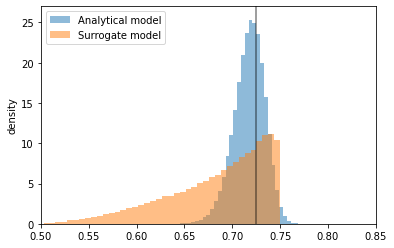}
\caption[]%
{{\small $n=2^2+1$}}    
\label{fig:mean and std of net14-2}
\end{subfigure}
\hfill
\begin{subfigure}{0.49\textwidth}  
\centering 
\includegraphics[width=\textwidth]{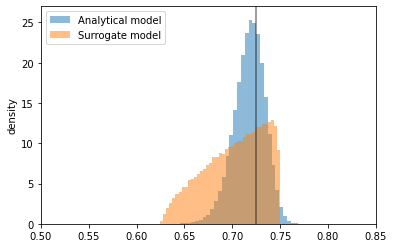}
\caption[]%
{{\small $n=2^3+1$}}    
\label{fig:mean and std of net24-2}
\end{subfigure}

\vskip\baselineskip
\begin{subfigure}{0.49\textwidth}   
\centering 
\includegraphics[width=\textwidth]{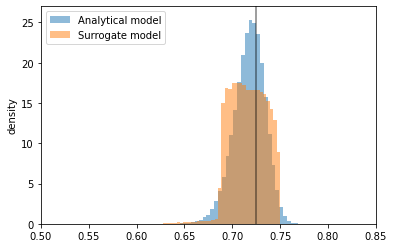}
\caption[]%
{{\small $n=2^4+1$}}    
\end{subfigure}
\hfill
\begin{subfigure}{0.49\textwidth}   
\centering 
\includegraphics[width=\textwidth]{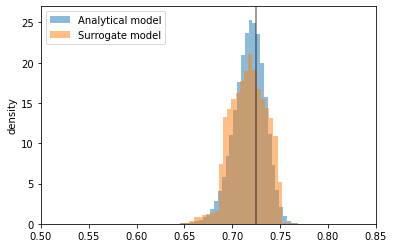}
\caption[]%
{{\small $n=2^5+1$}}    
\label{fig:mean and std of net44-1}
\end{subfigure}

\vskip\baselineskip
\begin{subfigure}{0.49\textwidth}   
\centering 
\includegraphics[width=\textwidth]{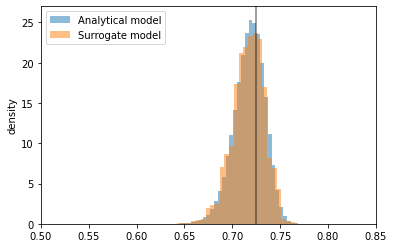}
\caption[]%
{{\small $n=2^6+1$}}    
\end{subfigure}
\hfill
\begin{subfigure}{0.49\textwidth}   
\centering 
\includegraphics[width=\textwidth]{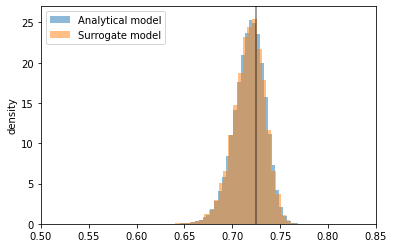}
\caption[]%
{{\small $n=2^7+1$}}    
\label{fig:mean and std of net44-2}
\end{subfigure}
\caption[]
{\small  \label{fig:infer_R}  Histograms of the MCMC samples for the posterior distribution with two designs, increasing the number of elements in our design $\mathcal{D}$.  A binary partition of $[0,1] = \Theta$ was used. Convergence to exact posterior is observed with few model evaluations. The vertical line is the true value of $R$.} 
\label{fig:infer_R_2}
\end{figure}
\FloatBarrier

\subsection{Locating a disk anomaly in EIT}

This section presents an example of Bayesian inference to solve an inverse problem for two unknown parameters. We implement two different designs ${\mathcal D}$ satisfying the $\varepsilon$-approximation property to test further the method's performance. We assess the efficiency by comparing a brute force approach with the numerical FEM implementation. In this case, no exact analytical expression for the forward map is available, so the convergence of the two numerical implementations tests the method's efficacy.

\subsubsection*{Inverse Problem}
The inverse problem is, in a circular domain, inferring the center $c=(c_1,c_2)$ of a small disk anomaly of different conductivity from a series of potential measurements $u(x)$ on $m$ boundary points $\{x_i\}_{i=1}^m\subset \partial B$, see Figure \ref{fig:incl_invprob}. We assume that the radius $r$ of the anomaly is known, and use the notation $\lambda_c$ for the conductivity to stress that the unknown parameter is $c$. 

\begin{figure}[!htp]
    \centering
    \includegraphics[scale = 0.6]{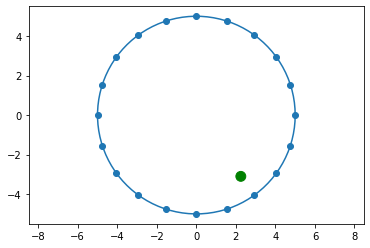}
    \caption{Domain $B$ with $20$ observation points (blue) and a disk anomaly (green)  with center at $c=(2.25,-3.1)$ and radius $r=0.25$.}
    \label{fig:incl_invprob}
\end{figure}

\subsection*{Forward Map} \label{FM3}
The forward map model is the same as the one presented in Section~\ref{EIT:disk} with three differences; first, we assume that $B$ is a disk of radius five. Second, the conductivity has the form 
\begin{equation} \label{cond4}
 \lambda_{c}(x) = \left\{ \begin{array}{cc}
1 + \rho     &  \mbox{ if }  |x-c|\leq r \\
1    &  \mbox{ otherwise} .
\end{array} \right. 
\end{equation}
Third, we let  $f(x)=x_1^4-6x_1^2x_2^2+x_2^4$ with $x =(x_1,x_2) \in \partial B$, furthermore $f$ satisfies the consistent condition \eqref{consistent}.
Hence, the admissible set of parameters is $\Theta=\{c\in{\mathbb R}^2 \text{ s.t. } |c| < 5-r \}$ and the exact FM ${\mathcal F}:\Theta \to V$  maps $c\in\Theta$ to the weak solution of \eqref{problem1} with $\lambda=\lambda_c$ and $f$ as above. 

\subsubsection*{Bilinear's Form Continuity Estimate}  In this case, the bilinear form  is 
$$ \mathcal{A}_c (u,v) = \int_B \lambda_c \nabla u \cdot \nabla v \, dx$$
for every  $u,v \in H_0^1 (B)$, and
the continuity estimate \eqref{modulodecontinuidad}
becomes
\begin{equation*}
    \sup_{v\in V, \Vert v\Vert_V\leq 1 } \vert \mathcal{A}_{c}(u,v)-\mathcal{A}_{c'}(u,v) \vert \leq   \Vert \lambda_{c} - \lambda_{c'} \Vert_{L^4(B)} \Vert \nabla u \Vert_{L^4(B)} \sup_{v\in V, \Vert v\Vert_V\leq 1 }\Vert \nabla v \Vert_{L^2(B)}, 
\end{equation*}    
where to obtain the latter inequality, we argued as in Section~\ref{exmaple:ICS}. Comparing latter inequality with \eqref{modulodecontinuidad}, we get 
\begin{equation}
\label{CGF}
C=1,\quad G(c,c') = \Vert \lambda_{c} - \lambda_{c'} \Vert_{L^4(B)}, \text{ and } F(u)= \Vert \nabla u \Vert_{L^4(B)}.
\end{equation}
There are some alternative expressions to compute the value of  $G(c, c')$. Let $D_r$ and $D_r'$ be two disks of radius $r$ centered at $c,c'\in \Theta$, and denote their symmetric difference by $D_r \triangle D_r' = (D_r\cup D_r')\setminus (D_r\cap D_r') $. Hence, 
$$
 \Vert \lambda_{c} - \lambda_{c'} \Vert_{L^4(B)}
 = \rho\left(\int_B \chi_{D_r \triangle D_r'} dx\right)^\frac{1}{4} = \rho |D_r \triangle D_r'|^\frac{1}{4},
$$
where $\chi_\omega$ is the characteristic function of the set $\omega\subset {\mathbb R}^2$, namely, 
$$
\chi_\omega(x) = \left\{
\begin{array}{ll}
1 & \text{ if } x \in \omega,\\
0 & \text{ if } x \in {\mathbb R}^2\setminus\omega.
\end{array}
\right.
$$

\subsubsection*{Designs}
For given $r< 5$, let $\Theta=\{c\in{\mathbb R}^2 \text{ s.t. } |c| < 5-r \}$ be the admissible parameters set. Because of $G(R,R')=g(d(f(R),f(R')))  =  \Vert \lambda_{c} - \lambda_{c'} \Vert_{L^4(B)}$, we identify $g(x)=x$, $d(a,b)=\Vert a-b\Vert_{L^4(B)}$ and $f(c) =\lambda_c$ in the structure hypothesis \eqref{eq:Gstructure}. Therefore, the metric space is $L=L^4(B)$ equipped with the distance $d$. The set
$f(\Theta)$ is totally bounded since it is the image of a bounded set in a finite-dimensional space. Therefore, for any $\epsilon>0$, there exists a $\epsilon$-net, denoted by $P_\epsilon$, and  we let ${\mathcal D}= f^{-1}(P_\epsilon)$ be the design satisfying the ($\epsilon$-)approximation property (cf. Definition~\ref{Def:ApproxProperty}). 

Since $B$ is a bounded disk, we construct $P_\epsilon$ by giving a set of points ${\mathcal C}=\{c^k\}_{k=1}^n\subset \Theta$ such that 
$$
\min_{1\leq k\neq j\leq n}\{|c_j-c_k|\}\le  r_\epsilon,
$$ 
where, if we denote by $D_j$ and $D_k$ two intersecting disks of radius $r$ centered at $c_j$ and $c_k$, respectively, $r_\epsilon$ is defined by the condition 
$$
|D_j \triangle D_k| = 
d\sqrt{r^2-r_\epsilon^2}+2r_\epsilon^2\left[\pi-\cos^{-1}\left(\frac{r_\epsilon}{2 r}\right)\right] < \frac{\epsilon^4}{\rho^4}.
$$
The number of elements $n$ in ${\mathcal C}$ is also defined by the latter condition. We present two different designs ${\mathcal D}_1$ and ${\mathcal D}_2$ for the numerical tests. 

\subsubsection*{Surrogate Model}

Given the design ${\mathcal D} = \{ c_1, \ldots,c_n \}$ as above, the surrogate approximation is   
$$
\hat{u}_c = \sum_{i\in \mathcal{I}_c} \alpha_i(c) u_{c_i},
$$
where 
$$ \alpha_i(c)=  \frac{1 }{\displaystyle \Vert \lambda_{c} - \lambda_{c_i} \Vert_{L^4(B)} \Vert \nabla u_{c_i} \Vert_{L^4(B)} \left( \sum_{j\in \mathcal{I}_c} \frac{ 1 }{ \Vert \lambda_{c} - \lambda_{c_j} \Vert_{L^4(B)} \Vert \nabla u_{c_j} \Vert_{L^4(B)}} \right)  }.$$

\subsubsection*{Error Estimates}
From above definitions, the bilinear's form  error is estimated by 
\begin{equation*}
    \sup_{v\in V, \Vert v \Vert_V \leq 1} \vert  \mathcal{A}_c ( \hat{u}_c,v)- \mathcal{L}(v) \vert \leq  \frac{1 }{\displaystyle \frac{1}{k}  \sum_{j\in \mathcal{I}_c} \frac{1}{\Vert \lambda_{c} - \lambda_{c_j} \Vert_{L^4(B)}\Vert \nabla u_{c_j} \Vert_{L^4(B)}} } = \varepsilon_c.
\end{equation*}

\subsubsection*{Results with Design ${\mathcal D}_1$} In this design we consider a set of centers ${\mathcal C}$ generated by a triangulation algorithm. Figure~\ref{figure7} presents the designs for different number $n$ of elements. 
\begin{figure}[!htb] 
\centering
\begin{subfigure}{0.4\textwidth}
\centering
\includegraphics[width=\textwidth]{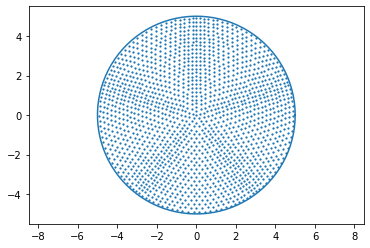}
\caption[]%
{{\small $n=1,626$}}    
\end{subfigure}
\hfill
\begin{subfigure}{0.4\textwidth}  
\centering 
\includegraphics[width=\textwidth]{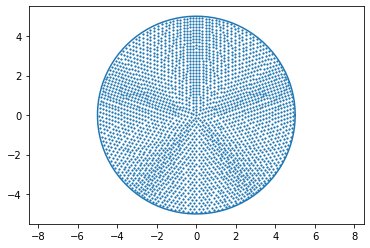}
\caption[]%
{{\small $n=2,641$}}    
\end{subfigure}
\vskip\baselineskip
\begin{subfigure}{0.4\textwidth}   
\centering 
\includegraphics[width=\textwidth]{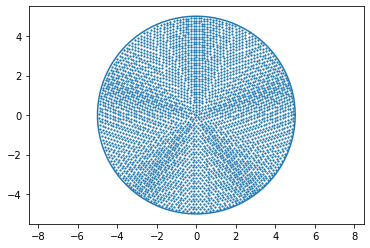}
\caption[]%
{{\small $n=3,516$}}    
\end{subfigure}
\hfill
\begin{subfigure}{0.4\textwidth}   
\centering 
\includegraphics[width=\textwidth]{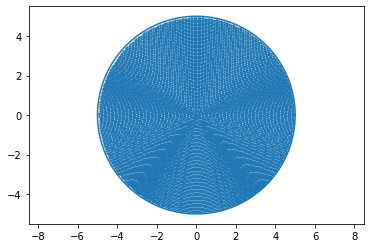}
\caption[]%
{{\small $n=12,426$}}    
\end{subfigure}
\caption[]
{\small Center points in the set ${\mathcal C}$ for design $\mathcal{D}_1$ and different values of $n$.} 
\label{figure7}
\end{figure}
For the numerical test we let $\rho = 6$ and take $m=20$ measurement points $x_i = (\cos(2\pi(i/20)), \sin(2\pi(i/20)))$ for $i \in\{ 0\ldots, 9\}$. We use the FEM implementation as before with a triangular mesh of $17,300$ elements and  $89,450$ vertices. The latter implementation generated the synthetic measurement for $c_0=(2.5,-3.1)$ with $r=0.25$ (cf. Figure~\ref{fig:incl_invprob}) and computed $F(u_R)$ for the surrogate coefficients. In addition, we let $\sigma = 0.01$ as the measurement noise standard deviation in \eqref{modelootravez1}.
For the Bayesian model, we assumed an uniform apriori supported on $\Theta$ and implemented MCMC algorithm t-walk to get $400,000$ samples from the posterior distribution. Finally, we considered $k=3$ in the nearest neighbors' computation.

\begin{table} [!htb]
\centering
\caption{Column $n$ is the number of solutions in the surrogate model, column a) is the time to compute all surrogate elements, $u_{c_i}$, in FreeFem, column  b) is the time required for the MCMC exploration with $400,000$ iterations, column  c) is the total time required, and column $\varepsilon_c$ is the error estimate. The sum in the last line is the time required to compute all models.}
\begin{tabular}{c c c c c} 
 \hline
 \textbf{n} & \textbf{a)} & \textbf{b)} & \textbf{c)} & \textbf{$\varepsilon_c$} \\  
 \hline
$n = 1626$ & 1.993 hr  & 0.120 hr  & 2.113 hr & 6.113 \\
$n = 2641$ & 3.363 hr & 0.2036 hr & 3.566 hr  & 5.786 \\
$n = 3516$ & 3.904 hr & 0.243 hr & 4.147 hr  & 5.591 \\
 $n = 12426$ & 13.403 hr & 0.838 hr &  14.241 hr  & 4.791 \\
 \hline
 & & Sum = & 24.067 hr &
\end{tabular}
\label{table:4}
\end{table}
Table~\ref{table:4} compares preprocessing, MCMC exploration, and total times for different numbers $n$ of surrogate elements. The last line reports the sum of all executions for every value of $n$.

The brute force approach of using the exact forward map for $400,000$ iterations would take approximately 24 days. Therefore, the time to sample the posterior distribution with $n=3,516$ surrogate elements is about $129$ times faster. Even performing all experiments for different values of $n$ is 24 times faster than using the exact forward map once. 
\begin{figure}[!b]
  \centering

  \begin{subfigure}[t]{0.45\textwidth}
    \centering\includegraphics[width=\textwidth]{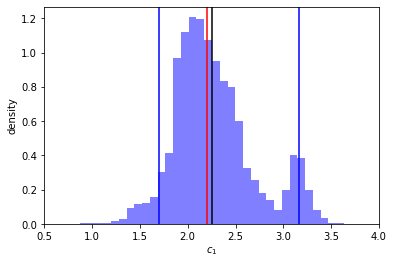}
    \caption{$c_1$ for $n=1626$}
  \end{subfigure}\hfill
  \begin{subfigure}[t]{0.45\textwidth}
    \centering\includegraphics[width=\textwidth]{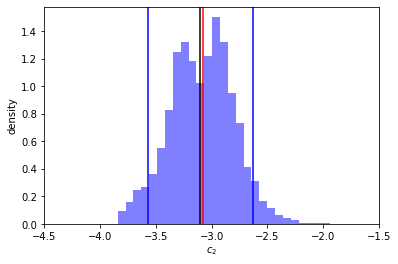}
    \caption{$c_2$ for $n=1626$}
  \end{subfigure}

  
  
  \begin{subfigure}[t]{0.45\textwidth}
    \centering\includegraphics[width=\textwidth]{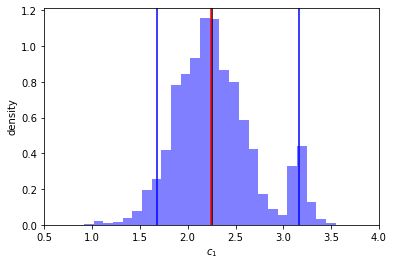}
    \caption{$c_1$ for $n=2641$}
  \end{subfigure}\hfill
  \begin{subfigure}[t]{0.45\textwidth}
    \centering\includegraphics[width=\textwidth]{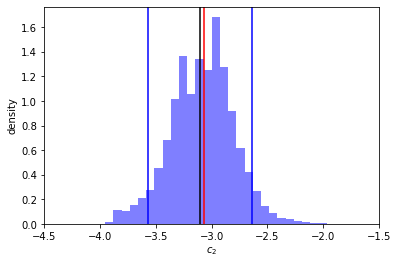}
    \caption{$c_2$ for $n=2641$}
  \end{subfigure}



  \begin{subfigure}[t]{0.45\textwidth}
    \centering\includegraphics[width=\textwidth]{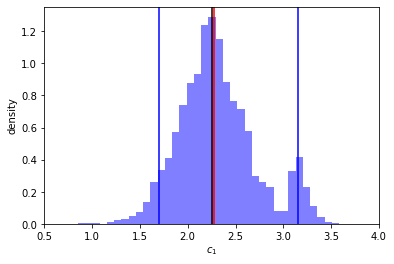}
    \caption{$c_1$ for $ n= 3516 $}
  \end{subfigure}\hfill
  \begin{subfigure}[t]{0.45\textwidth}
    \centering\includegraphics[width=\textwidth]{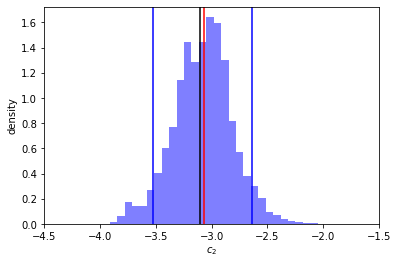}
    \caption{$c_2$ for $ n= 3516 $}
  \end{subfigure}
	\end{figure}


\begin{figure}[ht] \ContinuedFloat 
\centering
   \medskip
  \begin{subfigure}[t]{0.45\textwidth}
    \centering\includegraphics[width=\textwidth]{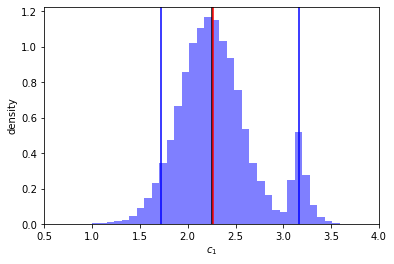}
    \caption{$c_1$ for $ n= 12426 $}
  \end{subfigure}\hfill
  \begin{subfigure}[t]{0.45\textwidth}
    \centering\includegraphics[width=\textwidth]{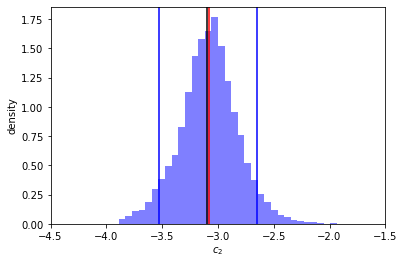}
    \caption{$c_2$ for $ n= 12426 $}
  \end{subfigure}
\caption{ Histograms of the MCMC samples for the posterior distribution with three designs, increasing the number of elements in our design $\mathcal{D}$. The black vertical line is the true value of $h$, the red line is the median and the blue lines are the 0.05 and 0.95 quantiles.}
\label{figure8}
\end{figure}

Figure~\ref{figure8} presents the posterior distributions for different numbers of surrogate elements $n$ and a burn-in of
10,000 samples. The posterior distributions do not change significantly after $n=2641$; the posterior distributions only get more sharply defined for larger values of $n$. Therefore, we have the convergence to the actual posterior distribution for relatively small values of $n$. The posterior distributions have relatively small support; interestingly, the $c_1$ marginal is bimodal, leading to a phantom in the reconstruction. 

\subsubsection*{Results with Design ${\mathcal D}_2$} In this design, we consider a set of centers ${\mathcal C}_2$ generated by a square grid. Figure~\ref{figure9} presents the designs for different numbers $n$ of elements. If the distance between centers is less than $r_\epsilon$ above, the design will satisfy the ($\epsilon$)-approximation property. Since $f$ is not the identity, the designs ${\mathcal D}_1$ and ${\mathcal D}_2$ in the metric space $L$ are not equivalent. 
\begin{figure}[!ht] 
\centering
\begin{subfigure}{0.4\textwidth}
\centering
\includegraphics[width=\textwidth]{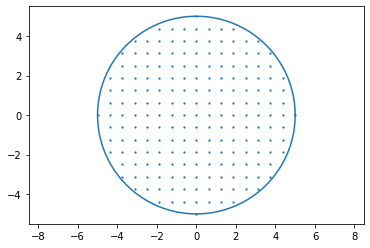}
\caption[]%
{{\small $n=197$}}    
\end{subfigure}
\hfill
\begin{subfigure}{0.4\textwidth}  
\centering 
\includegraphics[width=\textwidth]{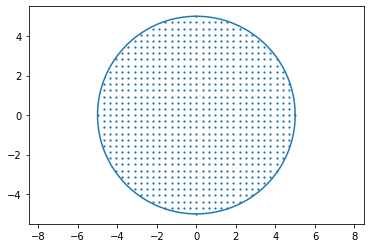}
\caption[]%
{{\small $n=797$}}    
\end{subfigure}
\vskip\baselineskip
\begin{subfigure}{0.4\textwidth}   
\centering 
\includegraphics[width=\textwidth]{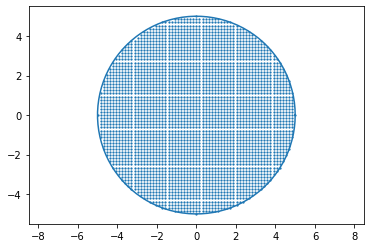}
\caption[]%
{{\small $n=3209$}}    
\end{subfigure}
\hfill
\begin{subfigure}{0.4\textwidth}   
\centering 
\includegraphics[width=\textwidth]{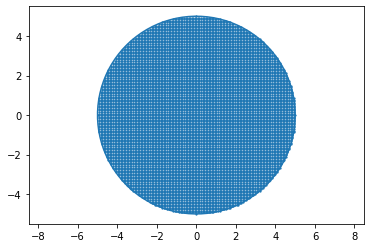}
\caption[]%
{{\small $n=12853$}}    
\end{subfigure}
\caption[]
{\small Center points in the set ${\mathcal C}$ for design $\mathcal{D}_2$ and different values of $n$.} 
\label{figure9}
\end{figure}
In this experiment, we used the numerical setting for design ${\mathcal D}_1$ and the same observations for the inference problem. 

\begin{table} [!htb]
\centering
\caption{Column $n$ is the number of solutions in the surrogate model, column a) is the time to compute all surrogate elements, $u_{c_i}$, in FreeFem, column  b) is the time required for the MCMC exploration with $400,000$ iterations, column  c) is the total time required, and column $\varepsilon_c$ is the error estimate. The sum in the last line is the time required to compute all models.}
\begin{tabular}{c c c c c} 
 \hline
 \textbf{n} & \textbf{a)} & \textbf{b)} & \textbf{c)} & \textbf{$\varepsilon_\rho$} \\  
 \hline
$n = 197$ & 0.358 hrs  & 0.033 hrs  & 0.391 hrs & 6.728 \\
$n = 797$ & 0.9120 hrs & 0.092 hrs & 1.004 hrs  & 6.648\\
$n = 3209$ & 3.394 hrs & 0.3072 hrs & 3.701 hrs  & 5.778 \\
$n = 12853$ & 16.658 hrs & 1.020 hrs & 17.689 hrs  & 4.890 \\
 \hline
 & & Sum = & 22.785 hrs &
\end{tabular}
\label{table:5}
\end{table}
Table~\ref{table:5} compares pre-processing, MCMC exploration, and total times for different numbers $n$ of surrogate elements. The last line reports the sum of all executions for every value of $n$.
In this case, the performance is similar to the one of  ${\mathcal D}_2$. The execution times are of the same order of magnitude, and the brute force approach would take approximately 24 days as before. Therefore, the time to sample the posterior distribution with $n=3,209$ surrogate elements is about $155$ times faster. 
\begin{figure}[!b]
  \centering
  \begin{subfigure}[t]{0.5\textwidth}
    \centering\includegraphics[width=\textwidth]{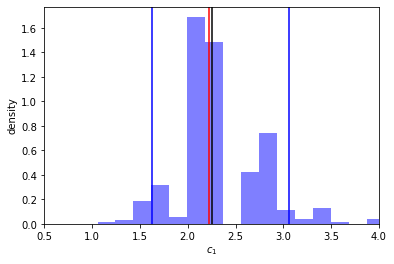}
    \caption{$c_1$ for $n=197$}
  \end{subfigure}\hfill
  \begin{subfigure}[t]{0.5\textwidth}
    \centering\includegraphics[width=\textwidth]{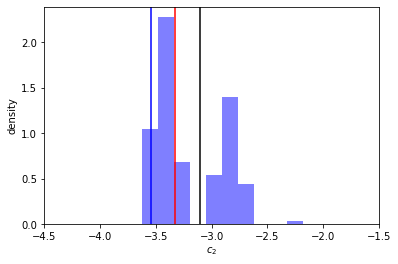}
    \caption{$c_2$ for $n=197$}
  \end{subfigure}

  
  
  \begin{subfigure}[t]{0.5\textwidth}
    \centering\includegraphics[width=\textwidth]{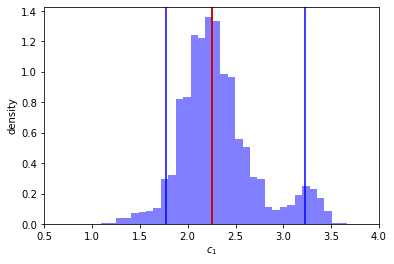}
    \caption{$c_1$ for $n=797$}
  \end{subfigure}\hfill
  \begin{subfigure}[t]{0.5\textwidth}
    \centering\includegraphics[width=\textwidth]{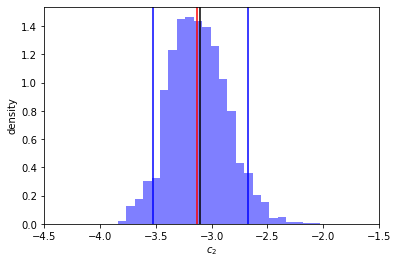}
    \caption{$c_2$ for $n=797$}
  \end{subfigure}

  \begin{subfigure}[t]{0.5\textwidth}
    \centering\includegraphics[width=\textwidth]{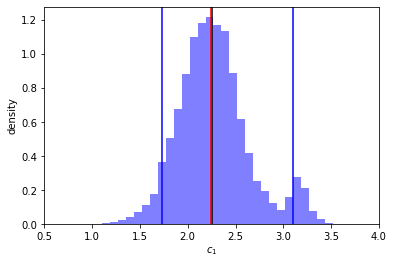}
    \caption{$c_1$ for $ n= 3209 $}
  \end{subfigure}\hfill
  \begin{subfigure}[t]{0.5\textwidth}
    \centering\includegraphics[width=\textwidth]{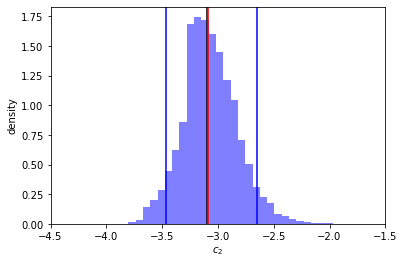}
    \caption{$c_2$ for $ n= 3209 $}
  \end{subfigure}
	\end{figure}


\begin{figure}[ht] \ContinuedFloat 
\centering
   \medskip
  \begin{subfigure}[t]{0.5\textwidth}
    \centering\includegraphics[width=\textwidth]{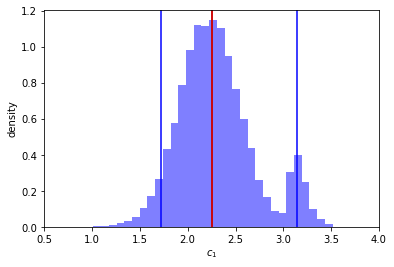}
    \caption{$c_1$ for  $n= 12853 $}
  \end{subfigure}\hfill
  \begin{subfigure}[t]{0.5\textwidth}
    \centering\includegraphics[width=\textwidth]{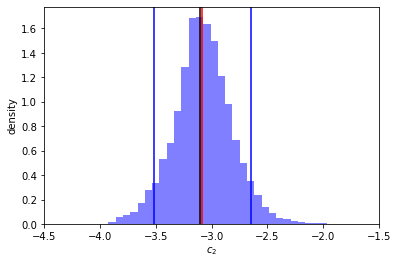}
    \caption{$c_2$ for  $n= 12853 $}
  \end{subfigure} 
\caption{ Histograms of the MCMC samples for the posterior distribution with three designs, increasing the number of elements in our design $\mathcal{D}$. The black vertical line is the true value of $h$, the red line is the median and the blue lines are the 0.05 and 0.95 quantiles.}
\label{figure9}
\end{figure}
Figure~\ref{figure9} presents the posterior distributions for different numbers of surrogate elements $n$ and a burn-in of
10,000 samples. The posterior distributions do not change significantly after $n=3,209$; the distributions only get more sharply defined and coincide with the posterior distributions presented in Figure~\ref{figure8} as expected by  Proposition~\ref{prop:consistency}.
As in the previous experiment, the $c_1$ marginal is bimodal, leading to the same phantom in the reconstruction.

\subsubsection*{Reconstruction and efficiency}
Figure~\ref{fig:adfig} shows the reconstruction with uncertainty quantification for design ${\mathcal D}_1$ with $n=12426$ elements. Transparent green disks correspond to the last $500$ samples in the MCMC considering an IAT$=49$, the red circumference represents the true inclusion, and the phantom in the reconstruction is clearly visible.   
\begin{figure}[htb]
    \centering
    \includegraphics[scale = 0.6]{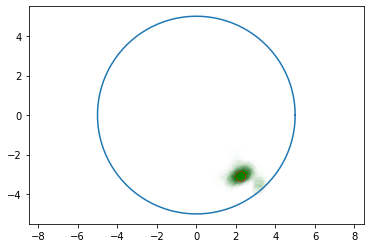}
    \caption{Reconstruction with uncertainty quantification of the disk inclusion.}
    \label{fig:adfig}
\end{figure}

Table~\ref{table:4} and  Table~\ref{table:5} show that the most significant increase in computational time is when computing the surrogate elements $u_c$ in the pre-processing step. Therefore, by doubling the length of the MCMC chain for $n=3,516$ in Table~\ref{table:4}, there is only an increase of 0.243 hr, which amounts to a total time of 4.39 hr. Meanwhile, the time using the exact forward map would double to approximately 48 days, about 262 times faster. Therefore, for longer the MCMC chain, the surrogate's performance efficiency increases sharply.

\section{Discussion} \label{sec:discussion}
The paper introduces the Physics-Based Surrogate Model (PBSM), a computationally efficient method for approximating forward maps associated with linear elliptic partial differential equations. Designed for use in Bayesian inferences for inverse problems with uncertainty quantification, the PBSM aims to mitigate the computational expense of exact forward map evaluations. However, they are only required at a preprocessing step that can be parallelized. Hence, the method's efficiency, especially for extended Markov Chain Monte Carlo (MCMC) explorations.

Key aspects of the PBSM include its requirement of a linear elliptic PDE, a continuity estimate of the equation's bilinear weak form on the unknown parameters, and a design.  Unlike other data-fit surrogate models, the method derives surrogate's coefficients analytically, maintaining the physics of the problem in the weak form of the equations. The adapted notion of distance for the problem, represented by the function $G$ in equation \eqref{modulodecontinuidad}, plays a crucial role in the proposed method. It serves a dual purpose in the approach: firstly, in the construction of a design that satisfies the approximation property, and secondly, in defining the $\epsilon$-approximation property. As a result, the method is informed and grounded in the physics of the problem through two essential aspects: the coefficients and the design.

While the examples focus on one or two parameters, the theory extends to parameter sets of arbitrary dimensions. The case of infinite dimension is also covered, but in practice, constructing a design represents a big challenge. The PBSM approximations are not continuous with respect to parameters, emphasizing that continuity is unnecessary for good performance in the inference process. Only a global control on the forward map's error, such as the one in the weak form of the forward problem, is enough.

In contrast to other surrogate model construction methods, such as polynomial chaos expansions, kriging, support vector machines, and (deep) neural networks, our proposed method relies on an entirely analytical approach for constructing surrogate coefficients and establishing error estimates. The results, including the existence of the posterior distribution, its consistency, and error estimates, provide assurance regarding the method's performance. Notably, the PBSM distinguishes itself from typical multi-fidelity methods by enabling accuracy and convergence tests without exact forward map evaluations. While convergence can be confirmed through theoretical results by testing increasingly refined designs, the method's exceptional speed makes this approach practical. The presented examples, particularly the second one, demonstrate that exploring various designs for convergence is three to four orders of magnitude faster than a single brute force test.

In contrast to methods that often use generic approaches like Latin squares or low-discrepancy sequences for design construction, the PBSM requires designs that satisfy the approximation property. While constructing such designs can be challenging, the proposed method is effective, especially for low-dimensional examples. However, high-dimensional problems introduce two potential challenges. First, design construction may become computationally expensive, requiring numerous elements to meet the approximation property, and it can strain computer memory and processing speed. The second challenge pertains to the nearest-neighbor algorithm, as no exact algorithm with acceptable performance in high dimensions is currently known. 


The present research suggests four natural extensions:
\emph{Extension to Vector-Valued PDEs:} The theory and examples developed in this work focus on scalar linear PDE problems. However, the approach can be extended with minimal modifications to vector-valued PDEs, such as those modeling linear elasticity and electrostatics. This extension can facilitate addressing more realistic examples. \emph{High-Dimensional Inverse Problems:} Testing the method for high-dimensional inverse problems is a logical next step, presenting challenges in numerical methods, particularly in design construction and nearest-neighbor computation. \emph{Exploration of Posterior and Design Improvement:} The approximation property is primarily required near the support of the posterior distribution. Therefore, the global approximation property hypothesis can be seen as somewhat restrictive. Exploring the posterior and refining the design while maintaining consistency and error estimate results could potentially relax this hypothesis and enhance the method's computational efficiency. \emph{Extension to Nonlinear Operators:} Extending the PBSM method to handle nonlinear operators of elliptic type or time-dependent problems is an intriguing avenue for future research. However, the lack of linearity will requiere e a different coefficient definition and a new error analysis. These extensions can further broaden the applicability and efficiency of the PBSM method, addressing more complex and diverse problems in various scientific and engineering domains.

Efficient and mathematically sound approximations in inverse problems with Uncertainty Quantification are achieved by focusing on the analytical and structural features of the forward map. Leveraging these features enables the derivation of fast surrogate approximations, accompanied by an error and consistency theory.  These surrogates retain sufficient information about the solution's landscape, allowing for the recovery of posterior distribution approximations with minimal loss of accuracy while accelerating computations by several orders of magnitude.

\section*{Acknowledgments}
We thank Robert Scheichl for his valuable suggestions that have enhanced this manuscript. A. Galaviz acknowledges the fellowship (CVU 638587) from CONAHCYT. Both A. Galaviz and A. Capella received partial support from CONAHCyT, México, grant CF-2023-G-122.

\bibliographystyle{unsrt}  
\bibliography{references}

\end{document}